\newtheorem{theorem}{Theorem}
\newtheorem{lemma}{Lemma}
\newcommand{\be}{\begin{equation}}
\newcommand{\ee}{\end{equation}}
\newcommand{\ds}{\displaystyle}
\begin{document}
\author[D. Albanez]{Débora A. F. Albanez}
\address{Departamento Acadêmico de Matemática\\ Universidade Tecnológica Federal do Paraná\\ Cornélio Procópio PR, 86300-000, Brazil} \email{deboraalbanez@utfpr.edu.br}

\author[M. Benvenutti]{Maicon José Benvenutti}
\address{Departamento de Matemática\\ Universidade Federal de Santa Catarina\\ Blumenau SC, 89036-004, Brazil} \email{m.benvenutti@ufsc.br}

\author[S. Little]{Samuel Little}
\address{Department of Applied Physics \\Towson University\\
Towson, MD 21252, USA} \email{slittle2@students.towson.edu}

\author[J. Tian]{Jing Tian}
\address{Department of Mathematics \\Towson University\\
Towson, MD 21252, USA} \email{jtian@towson.edu}

\def\bar{\overline}

\title[Parameter Analysis in Data Assimilation for turbulence Models]
{Parameter Analysis in Continuous Data Assimilation for various turbulence Models}

\begin{abstract}
In this study, we conduct parameter estimation analysis on a data assimilation algorithm for two turbulence models: the simplified Bardina model and the Navier-Stokes-$\alpha$ model. Our approach involves creating an approximate solution for the turbulence models by employing an interpolant operator based on the observational data of the systems. The estimation depends on the parameter alpha in the models. Additionally, numerical simulations are presented to validate our theoretical results.
\end{abstract}
\maketitle
\section{Introduction}

Due to the lack of global regularity for the three-dimensional Navier–Stokes equations (NSE), the so-called $\alpha$-models, namely, Navier-Stokes-$\alpha$ \cite{foias2002three}, Leray-$\alpha$ \cite{cheskidov2005leray}, modified Leray-$\alpha$ \cite{ilyin2006modified}, and simplified Bardina model \cite{layton2006well}, i.e., a class of several analytic subgrid-scale models of turbulence in three dimensions, have been studied in the last years as an alternative for abstract mathematical analysis, as well as computational implementation. These models can be seen as a regularization form of the NSE involving a lengthscale parameter $\alpha>0$ and an arbitrary smoothing kernel $G_{\alpha}$ which relates a filtered velocity solution $u$ with an unfiltered velocity $v$ through the filtering relation $u=G_{\alpha}* v$ such that, in some sense $v=v(\alpha)\rightarrow u$ when $\alpha\rightarrow 0^{+}$. For the models cited previously, the kernel $G_{\alpha}$ is taken to be the Green’s function associated
with the Helmholtz operator $I-\alpha^2\Delta$, i.e., $G_{\alpha}=(I-\alpha^2\Delta)^{-1}$. Therefore,
\begin{equation}\label{uandv}
v=u-\alpha^2\Delta u.    
\end{equation}
In this work, we consider two of these $\alpha-$models of turbulence: Firstly, the three-dimensional viscous simplified Bardina turbulence model, under periodic boundary conditions in a domain $\Omega=[0,L]^3$:
\begin{eqnarray}\label{Bardina1}
\left\{
\begin{array}{ll}
v_{t}-\nu\Delta v+(u\cdot\nabla)u+\nabla p=f, \\
\nabla\cdot v=\nabla\cdot u=0&   \\
\end{array}\right.
\end{eqnarray}
where vector $u = (u_1(x,t), u_2(x,t),u_3(x,t))$ is the spatial (filtered) velocity field and $v$ is related with $u$ through \eqref{uandv}, $p =p(x,t)$  is a modified scalar pressure field, $f = f(x,t)$ is a given external force and $\nu>0$ is the kinematic viscosity. This model was considered by Layton and Lewandowski \cite{layton2006well} as being a simpler approximation of
the Reynold stress tensor proposed by Bardina et al. \cite{bardina1980improved}. It is worth mentioning that if $\alpha=0$, from \eqref{uandv} we have $u=v$ and therefore \eqref{Bardina1} reduces to the classical 3D NSE.

The second $\alpha$-model considered is the three-dimensional Navier-Stokes-$\alpha$ equations (NS-$\alpha$), also known as Camassa-Holm equations:

\begin{equation}\label{NSalpha1}
\left\{\begin{array}{l}
v_t-\nu\Delta v
-u\times(\nabla\times v)+\nabla p=f,  \\
\nabla\cdot v=\nabla\cdot u=0
\end{array}\right.
\end{equation}
which is distinguished from \eqref{Bardina1} by the nonlinear term, replaced by $-u\times (\nabla\times v)$. Empirical and numerical experimental data with the solutions of NS-$\alpha$ for mean velocity and Reynolds stresses for turbulent flows in pipes and channels were shown to be successfull  (see \cite{chen1999connection}, \cite{chen1998camassa}, and \cite{chen1999camassa}).

Both the simplified Bardina and Navier-Stokes-$\alpha$ model were explored as a closure approximation for
the Reynolds equations. The Navier-Stokes-$\alpha$ was the first model in the $\alpha$-model family to be studied. It analogues of the Hagen solution, when suitably calibrated, yields good approximations to many experimental data. The simplified Bardina model is consistent with the Navier-Stokes-$\alpha$ in fluid mechanics, in particular, the explicit steady state solution to these two models will match the experimental data. Notably, when comparing the number of degrees of freedom in the long-term dynamics of the solutions, the simplified Bardina model has fewer degrees of freedom than the Navier-Stokes-$\alpha$ \cite{cao2006global}.

Our study investigates the parameter analysis in continuous data assimilation (CDA) for both simplified Bardina model and Navier-Stokes-$\alpha$ model. CDA as explored here was firstly applied to 2D Navier-Stokes equations in \cite{azouani2014continuous} and posteriorly for a wide range of models (see, for instance, \cite{albanez2018continuous}, \cite{albanez2016continuous}, \cite{farhat2015continuous}, \cite{farhat2016data}, \cite{farhat2016charney},
\cite{jolly2017data}, \cite{jolly2015determining}, 
\cite{jolly2019continuous}, \cite{markowich2016continuous}). The concept of CDA involves using $u(t)$, the unknown solution of the system we are interested in, and $I_h(u(t))$, the known physical observations at a spatial resolution size $h$ that are continuous in time $t$, to extract information about $u(t)$. Let us present more details of this procedure below.

Suppose we have the physical system
\begin{align}\label{wmap0}
\frac{du}{dt}=F_{\lambda}(u),
\end{align}
with the initialization point $u_0$ and the parameter $\lambda$ both missing. We want to construct an algorithm for approximate $u(t)$ from the available observational measurements $I_h (u(t))$. Next, we consider the auxiliary assimilated system
\begin{equation}\label{wmap}
\left\{
\begin{array}{l}
\displaystyle \frac{dw}{dt}=F_{\tilde{\lambda}}(w)-\eta I_h (w)+\eta I_h (u),\\ \\
w(0)=w_0,
\end{array}\right.
\end{equation}

where $w_0$ is taken to be arbitrary, $\tilde{\lambda}$ is an a priori guessing to $\lambda$, and $\eta$ is a positive adjustable parameter with dimension $[\frac{1}{T}]$. We integrate the above system with respect to time and obtain the assimilated solution $w$ to approximate the unknown physical solution $u$. In this paper, $\lambda$ is the parameter $\alpha$ related to the filter associated with our system (see (\ref{uandv})), and $I_h$ is a linear interpolant operator that satisfies the following approximation of identity property:
\begin{align}
\|I_h (\varphi) -\varphi\|^{2}_{L^{2}}\leq c_1h^{2}\|\varphi\|^{2}_{L^{2}}+c_2h^{4}\|\Delta \varphi\|^{2}_{L^{2}}, \label{Ih}
\end{align}
where $c_{1}$ and $c_{2}$ are absolute positive constants.
Some studies were developed on the continuous data assimilation algorithm to the NSE and various $\alpha$-models for the case where $\lambda$ is known, that is, $F_{\lambda}=F_{\tilde{\lambda}}$ in (\ref{wmap0})  and (\ref{wmap}) (see for example, \cite{albanez2018continuous}, \cite{azouani2014continuous},   \cite{farhat2017data} and references therein). These results state that for spatial resolution $h$ small enough and the parameter $\mu$ large enough, there is an exponential convergence of the assimilated solution $w$ of (\ref{wmap}) to the physical solution $u$ of (\ref{wmap0}) over time and in suitable norms.
Unlike the above studies, we will focus on the case where $\lambda$ is unknown and  $\tilde{\lambda}$ is an a priori guessing for $\lambda$. We refer to this as the parameter analysis in continuous data assimilation. The motivation for this work stems from the importance of accurately defining parameters that characterize a specific physical scenario when using mathematical models in simulations. Although the underlying physics are often well-understood, identifying the precise parameters for a given model can be challenging. In \cite{carlson2020parameter}, parameter estimation algorithms were applied to 2D Navier-Stokes equations
in the lack of true physical viscosity, where the authors exhibited the large-time error between the true solution of the model and the assimilated solution due to the deviation between the approximate and real physical viscosity. They have also provided an algorithm to recover both the true solution and the true viscosity using only spatially discrete velocity measurements. For a similar approach related to the three-dimensional Brinkman-Forchheimer-extended Darcy model, see \cite{DeboraMaiconBrinkman}.
Furthermore, in \cite{carlsonHudson2022}, the authors developed a nudged system for the classic three-dimensional Lorenz system to create an algorithm for recovering the parameters of this system. They established the convergence of this algorithm with the correct parameters and combined with the computational experiments which proves the efficacy of the algorithm. See  \cite{Biswas_2023},  \cite{yucao2022}, \cite{CHEN2023133552}, \cite{PhysRevFluids.9.054602}, \cite{farhat2020data}, \cite{Martinez_2022}, \cite{Martinez_2024}, \cite{1M1426109}, and references therein for similar results.

Our main results are on the analytical and computational analysis of the long-time error between the solution of the auxiliary assimilated system and the solution of the physical system for both mentioned turbulence models. The length-scale parameter $\alpha>0$ is considered unknown and replaced by a ``guess" parameter $\beta>0$ in the assimilation process. More precisely, for the three-dimensional viscous simplified Bardina model, the assimilated system is given by 
\begin{eqnarray}\label{Bardina1assimilated}
\left\{
\begin{array}{ll}
z_{t}-\nu\Delta z+(w\cdot\nabla)w+\nabla p=f -\eta (I_h (w)-I_h (u)), \\
\nabla\cdot z=\nabla\cdot w=0&   \\
\end{array}\right.
\end{eqnarray}
while for the Navier-Stokes-$\alpha$ equations (NS-$\alpha$), the assimilated system is given by 
\begin{equation}\label{NSalpha1assimilated}
\left\{\begin{array}{l}
z_t-\nu\Delta z
-w\times(\nabla\times z)+\nabla p=f - (\eta-\eta\beta^{2}\Delta) (I_h (w)-I_h (u)),  \\
\nabla\cdot z=\nabla\cdot w=0
\end{array}\right.
\end{equation}
Here, we have the relation
\begin{equation}\label{uandv2}
z=w-\beta^2\Delta w,    
\end{equation}
instead of (\ref{uandv}).

In this work, we prove that under suitable conditions, the approximation solution given in (\ref{Bardina1assimilated}) and in (\ref{NSalpha1assimilated})  converges to the physical solution given in (\ref{Bardina1}) and in (\ref{NSalpha1}), respectively, aside from an error depending the difference between the real and approximating parameters (see Theorems \ref{teoBardina} and \ref{teoNSalpha}). The suitable condition mentioned above is related to the size of resolution parameter $h$ of $I_{h}$ and the parameter $\mu$ compared to the other parameters (see again Theorems \ref{teoBardina} and \ref{teoNSalpha} for the conditions). In (\ref{NSalpha1assimilated}), it is necessary to add the term $\mu\beta^{2}\Delta$ in the assimilation part in order to be able to properly handle the non-linearity $w\times(\nabla\times z)$.
To validate our theoretical analysis, we have presented some numerical simulations. The simulations are powered by a flexible new Python package, Dedalus. Direct numerical simulations with turbulence is very challenging. After trying the traditional finite difference and finite element methods, we decide to use this newly developed package. The advantage of Dedalus is its support for symbolic entry of equations, as well as boundary and initial conditions. Dedalus uses pseudospectral methods to solve differential equations, so as long as the domain is spectrally representable, we can use the package. We perform the comparisons for both the simplified Bardina and Navier-Stokes-alpha models by choosing the nudging parameter $\eta$ within the threshold and outside the threshold. Simulations where $\eta$ lies within the threshold show the convergence, while others outside do not. This serves as a verification of our analysis.


The outline of this work is as follows: In Section 2, we state the mathematical setting,
notations, and classical inequalities used for obtaining all the results. We present the main theoretical results in Section 3. The computational implementation is presented in Section 4. Finally, Conclusions and Appendix are presented at the end.

\section{Preliminaries}
This section presents some mathematical frameworks for the problems we study here. 

\subsection{Mathematical settings}

The statements contained here are standard, and proofs can be found in literature, e.g., in \cite{constantin1988navier},  \cite{Foias_Manley_Rosa_Temam_2001}, \cite{robinson} and \cite{doi:10.1137/1.9781611970050}. Let $\Omega=[0,L]^3$, we denote by $L^{p} $   the usual three-dimensional Lebesgue vector spaces. For each $s \in \mathbb{R}$, we define the Hilbert space
\begin{equation}
\dot{H}_{s}= \left\{u(x)=\sum_{K \in \mathbb{Z}^{3}\backslash\{0\}}\hat{u}_{K}e^{2\pi i \frac{K\cdot x}{L}};  \,\, \hat{u}_{K}= \overline{\hat{u}_{-K}}, \,\, \sum_{K \in \mathbb{Z}^{3}\backslash\{0\}} |K|^{2s}|\hat{u}_{K}|^{2}<\infty\right\}, \nonumber
\end{equation}
with the inner product
\begin{equation}
(u,v)_{\dot{H}_{s}}= L^{3}\sum_{K \in \mathbb{Z}^{3}\backslash\{0\}} \left(\frac{2\pi|K|}{L}\right)^{2s} \hat{u}_{K}\cdot\overline{\hat{v}_{K}}, \nonumber
\end{equation}
and closed subspace $\dot{V}_{s}$ defined as
\begin{equation}
\left\{u(x)=\sum_{K \in \mathbb{Z}^{3}\backslash\{0\}}\hat{u}_{K}e^{2\pi i \frac{K\cdot x}{L}};  \,\, \hat{u}_{K}= \overline{\hat{u}_{-K}}, \,\, \hat{u}_{K}\cdot K=0\right\}, \nonumber
\end{equation}
where $\sum_{K \in \mathbb{Z}^{3}\backslash\{0\}} |K|^{2s}|\hat{u}_{K}|^{2}<\infty.$
We have that $\dot{V}_{s_{1}} \subset \dot{V}_{s_{2}}$ if $s_{1} \geq s_{2}$ and $\dot{V}_{-s}$ is the dual of $\dot{V}_{s}$, for all $s \geq 0$.

We denote by $\mathcal{P}:\dot{H}_{s}\rightarrow \dot{V}_{s}$ the classical Helmholtz-Leray orthogonal projection given by 
\begin{equation}
 \mathcal{P}u= \sum_{K \in \mathbb{Z}^{3}\backslash\{0\}} \left(\hat{u}_{K}-\frac{K(\hat{u}_{K} \cdot K)}{|K|^{2}} \right)e^{2\pi i \frac{K\cdot x}{L}} \nonumber 
\end{equation}
 and $A: \dot{V}_{2s} \rightarrow \dot{V}_{2s-2}$ the operator given by 
\begin{equation}
Au=\sum_{K \in \mathbb{Z}^{3}\backslash\{0\}} \frac{4\pi^{2}|K|^{2}}{L^{2}}\hat{u}_{K}e^{2\pi i \frac{K\cdot x}{L}}. \nonumber
\end{equation}
Moreover, we have $Au=-\Delta u = -\mathcal{P} \Delta u=-\Delta \mathcal{P} u$.

We adopt the classical notations  $H= \dot{V}_{0}$, $V= \dot{V}_{1}$, $D(A)= \dot{V}_{1}$, $V^{'}= \dot{V}_{-1}$, $D^{'}= \dot{V}_{-2}$, $\|u\|=\|u\|_{L^{2}}$ and $(u,v)=(u,v)_{L^{2}}$, 
 with the identities
\begin{align}
\|u\|_{H}= \|u\|, \,\,\,\,\, \|u\|_{V}= \| \nabla u\| \,\,\,\,\, \mbox{and} \,\,\,\,\,\|u\|_{D(A)}= \|A u\|, \nonumber
\end{align}
and the  Poincar{\'e} inequalities
 \begin{equation}
 \|u\|^{2}\leq\lambda^{-1}_{1}\|\nabla u\|^{2} \,\,\mbox{  for all } u\in V \mbox{  and  }\,\, \|\nabla u\|^{2}\leq\lambda^{-1}_{1}\|Au\|^{2}\,\, \mbox{ for all  } u\in D(A), \label{Poincare}
\end{equation}
where $\displaystyle \lambda_{1}= \frac{4\pi^{2}}{L^{2}}$.  We also recall  some particular cases of the Gagliardo-Nirenberg inequality:
\begin{equation}
\left\{
\begin{array}{ll}
\| g\|_{L^{3}} \leq c\|g\|^{\frac{1}{2}}\| \nabla g\|^{\frac{1}{2}}, & \forall\, g\in V; \\
\| g\| _{L^{4}} \leq c\|g\|^{\frac{1}{4}}\| \nabla g\|^{\frac{3}{4}}, & \forall\, g\in V; \\
\| g\| _{L^{6}}\leq c\|\nabla  g\|, & \forall\, g\in V,
\end{array}\right.  \label{Gagliardo-Nirenberg}
\end{equation}%
where  $c$  is a dimensionless constant.

 Finally, for each $\alpha>0$, we have
\begin{eqnarray}
(I+\alpha^{2}A)^{-1}u = \sum_{K \in \mathbb{Z}^{3}\backslash\{0\}} \frac{L^{2}}{L^{2}+4\alpha^{2}\pi^{2} |K|^{2}}\hat{u}_{K}e^{2\pi i \frac{K\cdot x}{L}}, \nonumber
\end{eqnarray}
with the estimates
\begin{align}
\|(I+\alpha^{2}A)^{-1}u\| \leq \|u\| \,\,\,\, \mbox{ and } \,\,\,\, \|(I+\alpha^{2}A)^{-1}u\| \leq \frac{1}{\alpha^{2}}\|u\|_{D^{'}}. \label{inequality-1}
\end{align}

\subsection{Abstract formulations}

Define the bilinear extended operators $B,\Tilde{B}:V\times V\rightarrow V^{'}$ for every $u,v\in V$ (see \cite{foias2002three}):
$$B(u,v)=\mathcal{P}[(u\cdot\nabla)v]\hspace{0.5cm}\mbox{and}\hspace{0.5cm}\Tilde{B}(u,v)=-\mathcal{P}(u\times(\nabla\times v))$$
The bilinear operators have the following properties for all $u,v,w\in V$:
\begin{equation*}
 \langle B(u,v),w\rangle_{V'}=-\langle B(u,w),v\rangle_{V'}    
\end{equation*}
\begin{equation*}
   \langle \Tilde{B}(u,v),w\rangle_{V'}=-\langle \Tilde{B}(w,v),u\rangle_{V'}
\end{equation*}
where $\langle \,\cdot\,,\,\cdot\,\rangle$ is the duality in the proper situation, which implies
\begin{equation}\label{zero}
 \langle B(u,v),v\rangle_{V'}=0\hspace{0.5cm}\mbox{and}\hspace{0.5cm}\langle \Tilde{B}(u,v),u\rangle_{V'}=0.   
\end{equation}

From inequalities given in (\ref{Poincare}), (\ref{Gagliardo-Nirenberg}), and (\ref{inequality-1}) and duality, we obtain 
\begin{equation}\label{bilin6}
\|(I+\alpha^{2}A)^{-1}\tilde{B}(u,v)\|\leq c^{2}\frac{\lambda_{1}^{-\frac{1}{4}}}{\alpha^{2}}\|\nabla u\|\|v\|.
\end{equation}

Using Leray projector and assuming the forcing term $f\in L^{\infty}(0,T;H)$, we rewrite the system \eqref{Bardina1} as 
\begin{eqnarray}\label{Bardina1Leray}
\left\{
\begin{array}{ll}
\ds\frac{dv}{dt}+\nu Av+B(u,u)=f, \\
v=u+\alpha^2Au,\\
\nabla\cdot v=\nabla\cdot u=0,\,\,\, u(0)=u_0.&   \\
\end{array}\right.
\end{eqnarray}

and the system \eqref{NSalpha1} as 

\begin{equation}\label{NSalpha1Leray}
\left\{\begin{array}{l}
\ds\frac{dv}{dt}+\nu Av+\Tilde{B}(u,v)=f,\\
v=u+\alpha^2Au,\\
\nabla\cdot v=\nabla\cdot u=0, \,\,\, u(0)=u_0.
\end{array}\right.
\end{equation}

Similarly, we rewrite the system \eqref{Bardina1assimilated} as 
\begin{eqnarray}\label{Bardina1LerayCDA}
\left\{
\begin{array}{ll}
\ds\frac{dz}{dt}+\nu Av+B(w,w)=f-\eta\mathcal{P}(I_hw-I_hu), \\
z=w+\beta^2Aw,\,\,\, w(0)=w_0.\\
\end{array}\right.
\end{eqnarray}

and the system \eqref{NSalpha1assimilated} as 

\begin{equation}\label{NSalpha1LerayCDA}
\left\{\begin{array}{l}
\ds\frac{dz}{dt}+\nu Az+\Tilde{B}(w,z)=f-\eta(I+\beta^2A)(I_hw-I_hu),\\
z=w+\beta^2Aw,\,\,\, w(0)=w_0.\\
\end{array}\right.
\end{equation}
with $I_h$ a linear operator satisfying the property \eqref{Ih}.

Results concerning the global well-posedness of the four above systems with initial condition $u(0)$, $w(0)\in V$ can be found in \cite{cao2006global},  \cite{foias2002three}, \cite{albanez2018continuous}, and  \cite{albanez2016continuous}, respectively. To all these systems, the solutions have the properties $u$, $w \in C([0,T);V)\cap L^2([0,T];D(A))$ with $\frac{du}{dt}$, $\frac{dw}{dt}\in L^2([0,T);H)$.

Finally, to establish our primary theoretical outcomes, we require the following version of the Gronwall's inequality, whose proof is in the Appendix.

\begin{lemma}[Gronwall Inequality]\label{Gronwall}
Let $\xi:[t_0,\infty)\rightarrow[0,\infty)$ absolute continuous function and $\beta:[t_0,\infty)\rightarrow[0,\infty)$ locally integrable. Suppose that there exist positive constants $C,M$ and $T$ such that
\begin{equation}\label{int1}
    \sup\limits_{t\geq t_0}\ds\int_t^{t+T}\beta(s)\,ds\leq M
\end{equation}

\begin{equation}\label{ineq1}
    \ds\frac{d}{dt}\xi(t)+C\xi(t)\leq \beta(t)
\end{equation}
  is satisfied for all $t\geq t_0$. Then
  \begin{equation}\label{int211}
    \xi(t)\leq e^{-C(t-t_0)}\xi(t_0)+M\left(\frac{e^{2CT}}{e^{CT}-1}\right)  
  \end{equation}
  
   for all $t\geq t_0$.
\end{lemma}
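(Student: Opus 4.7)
The plan is to pass from the pointwise differential inequality \eqref{ineq1} to an integrated Duhamel-type bound via an integrating-factor argument, and then estimate the resulting convolution against $\beta$ by chopping $[t_0,t]$ into sub-intervals of length at most $T$, on each of which hypothesis \eqref{int1} can be invoked.

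First, since $\xi$ is absolutely continuous, multiplying \eqref{ineq1} by $e^{Ct}$ rewrites it as $\tfrac{d}{dt}\bigl(e^{Ct}\xi(t)\bigr)\le e^{Ct}\beta(t)$ almost everywhere. Integrating on $[t_0,t]$ and dividing by $e^{Ct}$ yields
$$\xi(t)\le e^{-C(t-t_0)}\xi(t_0)+e^{-Ct}\int_{t_0}^{t}e^{Cs}\beta(s)\,ds,$$
which already reproduces the decaying first term of \eqref{int211}. It therefore suffices to show that the second term is bounded by $Me^{2CT}/(e^{CT}-1)$ uniformly in $t\ge t_0$.

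For that I would partition $[t_0,t]$ from the right. Let $n$ be the largest non-negative integer with $t-nT\ge t_0$, and write
$$\int_{t_0}^{t}e^{Cs}\beta(s)\,ds = \int_{t_0}^{t-nT}e^{Cs}\beta(s)\,ds + \sum_{k=0}^{n-1}\int_{t-(k+1)T}^{t-kT}e^{Cs}\beta(s)\,ds.$$
On the $k$-th full slab, the monotonicity bound $e^{Cs}\le e^{C(t-kT)}$ together with \eqref{int1} applied at the left endpoint $t-(k+1)T\ge t_0$ yields a contribution of $Me^{C(t-kT)}$. The leftover piece $[t_0,t-nT]$ has length at most $T$ and is anchored at $t_0$, so \eqref{int1} again gives $\int\beta\le M$, and bounding $e^{Cs}\le e^{C(t-nT)}$ produces $Me^{C(t-nT)}$. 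Folding the leftover into the geometric sum gives
$$\sum_{k=0}^{n-1}e^{C(t-kT)} + e^{C(t-nT)} \le \frac{e^{C(t+T)}}{e^{CT}-1};$$
multiplying through by $Me^{-Ct}$ then bounds the convolution by $Me^{CT}/(e^{CT}-1)$, which is sharper than (and hence implies) the stated constant $Me^{2CT}/(e^{CT}-1)$.

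The argument is essentially elementary; the only genuine obstacle is the bookkeeping in the chopping step — making sure \eqref{int1} is invoked only on subintervals of length at most $T$ whose left endpoints lie in $[t_0,\infty)$, and cleanly folding the leftmost partial slab into the geometric series so that the constant assembles as advertised in \eqref{int211}.
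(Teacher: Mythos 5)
Your argument is correct and follows essentially the same route as the paper's proof: an integrating factor to obtain the Duhamel-type bound, a partition of $[t_0,t]$ into length-$T$ slabs on which \eqref{int1} applies, and a geometric series. The only cosmetic difference is that you anchor the slabs at $t$ rather than at $t_0$, which in fact yields the slightly sharper constant $Me^{CT}/(e^{CT}-1)$ and hence still implies \eqref{int211}.
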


\subsection{Estimates for reference solutions}
We state here the inequalities related to global unique solutions of systems \eqref{Bardina1Leray} and \eqref{NSalpha1Leray} that are useful to determine the long-time error exhibited in Theorems \ref{teoBardina} and \ref{teoNSalpha}.\\
Firstly, Lemma \ref{lem1} below refers to both solutions of the Bardina \eqref{Bardina1Leray} and Navier-Stokes-$\alpha$ \eqref{NSalpha1Leray}:
\begin{lemma}\label{lem1}
For all $t\geq 0$, the global unique solution of \eqref{Bardina1Leray} (or \eqref{NSalpha1Leray}) satisfies
$$\|u(t)\|^2+\alpha^2\|\nabla u(t)\|^2\leq e^{-\nu\lambda_1t}(\|u(0)\|^2+\alpha^2\|\nabla u(0)\|^2)+\ds\frac{1}{\lambda_1^2\nu^2}\cdot \sup\limits_{s\geq 0}\|f(s)\|^2.$$
In particular, if 
\begin{equation}\label{M1}  M_1:=\|u(0)\|^2+\alpha^2\|\nabla u(0)\|^2+\frac{1}{\lambda_1^2\nu^2}\cdot \sup\limits_{s\geq 0}\|f(s)\|^2,
\end{equation} 
we have for all $t\geq 0$,
$$\|u(t)\|^2+\alpha^2\|\nabla u(t)\|^2\leq M_1.$$
Moreover, for $T>0$, we have
\begin{equation}\label{inttT}
    \ds\int_t^{t+T}\|\nabla u(s)\|^2+\alpha^2\|Au(s)\|^2\,ds\leq \ds\frac{M_1}{\nu}+\ds\frac{T}{\nu^2\lambda_1}\sup\limits_{s\geq 0}\|f(s)\|^2.
\end{equation}

\end{lemma}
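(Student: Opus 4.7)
The plan is to derive an energy estimate by pairing each system with the filtered velocity $u$ and exploiting the orthogonality of the nonlinear terms together with the special structure of $v=u+\alpha^{2}Au$. Specifically, I would take the $V'$–$V$ duality pairing of the first equation in \eqref{Bardina1Leray} (respectively \eqref{NSalpha1Leray}) with $u$. Using $v=u+\alpha^{2}Au$, the time-derivative term rewrites as
\[
\left\langle \frac{dv}{dt},u\right\rangle = \frac{1}{2}\frac{d}{dt}\bigl(\|u\|^{2}+\alpha^{2}\|\nabla u\|^{2}\bigr),
\]
and the viscous term becomes $\nu\langle Av,u\rangle=\nu\|\nabla u\|^{2}+\nu\alpha^{2}\|Au\|^{2}$. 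The nonlinear contributions vanish by \eqref{zero}: for Bardina, $\langle B(u,u),u\rangle=0$, while for NS-$\alpha$, $\langle \tilde{B}(u,v),u\rangle=0$. Thus a single energy identity handles both models simultaneously.

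Next, I would estimate the forcing term by Cauchy--Schwarz, Poincaré, and Young:
\[
(f,u)\leq \|f\|\|u\| \leq \lambda_{1}^{-1/2}\|f\|\|\nabla u\| \leq \frac{1}{2\nu\lambda_{1}}\|f\|^{2}+\frac{\nu}{2}\|\nabla u\|^{2}.
\]
Absorbing the last term on the left and writing $Y(t)=\|u(t)\|^{2}+\alpha^{2}\|\nabla u(t)\|^{2}$ yields
\[
\frac{d}{dt}Y(t)+\nu\|\nabla u\|^{2}+2\nu\alpha^{2}\|Au\|^{2}\leq \frac{1}{\nu\lambda_{1}}\|f(t)\|^{2}. \qquad (\star)
\]
Invoking $\|\nabla u\|^{2}\geq \lambda_{1}\|u\|^{2}$ and $\|Au\|^{2}\geq \lambda_{1}\|\nabla u\|^{2}$ on the dissipative terms gives $\nu\|\nabla u\|^{2}+2\nu\alpha^{2}\|Au\|^{2}\geq \nu\lambda_{1}Y$, so
\[
\frac{d}{dt}Y(t)+\nu\lambda_{1}Y(t)\leq \frac{1}{\nu\lambda_{1}}\sup_{s\geq 0}\|f(s)\|^{2}.
\]
An integrating-factor argument (the constant-right-hand-side specialization of Lemma~\ref{Gronwall}) then produces the pointwise decay estimate
\[
Y(t)\leq e^{-\nu\lambda_{1}t}Y(0)+\frac{1}{\nu^{2}\lambda_{1}^{2}}\sup_{s\geq 0}\|f(s)\|^{2},
\]
which is the first assertion, and consequently $Y(t)\leq M_{1}$ with $M_{1}$ as in \eqref{M1}.

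For the time-integrated bound \eqref{inttT}, I would return to inequality $(\star)$ and integrate from $t$ to $t+T$. Using $Y(t)\leq M_{1}$ to control $-Y(t+T)+Y(t)$ from above by $M_{1}$ yields
\[
\nu\int_{t}^{t+T}\!\bigl(\|\nabla u(s)\|^{2}+2\alpha^{2}\|Au(s)\|^{2}\bigr)\,ds \leq M_{1}+\frac{T}{\nu\lambda_{1}}\sup_{s\geq 0}\|f(s)\|^{2},
\]
after which dividing by $\nu$ and dropping a factor of two on the $\alpha^{2}\|Au\|^{2}$ term gives \eqref{inttT}.

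I do not expect a serious obstacle: the nonlinearities drop out by orthogonality, and the Helmholtz relation $v=u+\alpha^{2}Au$ converts $\langle v_{t},u\rangle$ into a perfect time derivative. The only technical point worth noting is the justification of the energy identity for the duality pairing $\langle v_{t},u\rangle$, which is legitimate in the regularity class $u\in C([0,T);V)\cap L^{2}([0,T];D(A))$ with $\frac{du}{dt}\in L^{2}([0,T);H)$ stated earlier in the paper.
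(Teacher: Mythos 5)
Your proposal is correct and follows essentially the same route as the paper: pair the equation with $u$, kill the nonlinearity via \eqref{zero}, use Poincar\'e and Young on the forcing, apply Gronwall for the pointwise decay, and integrate the pre-absorption energy inequality over $[t,t+T]$ for \eqref{inttT}. The only differences are cosmetic (you keep the factor $2\nu\alpha^{2}\|Au\|^{2}$ before discarding it, and you make the duality-pairing justification explicit), so nothing further is needed.
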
 
\begin{proof} Multiplying the system \eqref{Bardina1Leray} (or system \eqref{NSalpha1Leray}) by the strong solution $u(t)$, integrating in $\Omega$ and integrating by parts, we get
\begin{eqnarray}\label{first}
    &\ds\frac{1}{2}\ds\frac{d}{dt}(\|u(t)\|^2+\alpha^2\|\nabla u(t)\|^2)+\nu(\|\nabla u(t)\|^2+\alpha^2\|Au(t)\|^2)=(f(t),u(t))_{L^2}\nonumber\\
    &\leq \ds\frac{1}{2\nu\lambda_1}\|f(t)\|^2+\ds\frac{\nu\lambda_1}{2}\|u(t)\|^2\leq \ds\frac{1}{2\nu\lambda_1}\|f(t)\|^2+\ds\frac{\nu}{2}\|\nabla u(t)\|^2,
\end{eqnarray}
and thus
\begin{equation}\label{above}
\ds\frac{d}{dt}(\|u(t)\|^2+\alpha^2\|\nabla u(t)\|^2)+\nu\lambda_1(\|u(t)\|^2+\alpha^2\|\nabla u(t)\|^2)\leq\ds\frac{1}{\nu\lambda_1}\|f(t)\|^2.    \nonumber
\end{equation}

By classical Gronwall's inequality (see \cite{evans2022partial}), we get
\begin{eqnarray}
    &\|u(t)\|^2+\alpha^2\|\nabla u(t)\|^2\\
    &\leq e^{-\nu\lambda_1t}(\|u(0)\|^2+\alpha^2\|\nabla u(0)\|^2)+\ds\frac{1}{\nu\lambda_1}\ds\int_0^te^{\nu\lambda_1(s-t)}\|f(s)\|^2\,ds\nonumber\\
    &\leq e^{-\nu\lambda_1t}(\|u(0)\|^2+\alpha^2\|\nabla u(0)\|^2)+\ds\frac{1}{\nu^2\lambda_1^2}\cdot\sup\limits_{s\geq 0}\|f(s)\|^2.\nonumber
\end{eqnarray}
We obtain \eqref{inttT} from integrating  \eqref{first}.
\end{proof}

Now, Lemma \ref{lem3} refers to global unique solution of the Bardina system \eqref{Bardina1Leray}: 
\begin{lemma}\label{lem3}
    For all $T>0$ and $t\geq 0$, we have the following estimate for the solution of \eqref{Bardina1Leray}:
    $$\ds\int_t^{t+T}\|u_t(s)\|^2\,ds\leq \frac{6M_1\nu}{\alpha^2}+\frac{81}{256}\frac{c^{16}M_1^5T}{\nu^6\alpha^8}+
    3T\left(1+\frac{2}{\alpha^2\lambda_1}\right)\sup\limits_{s\geq 0}\|f(s)\|^2$$
\end{lemma}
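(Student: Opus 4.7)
The plan is to invert the Helmholtz-type operator in \eqref{Bardina1Leray} so as to express $u_{t}$ directly, then bound the three resulting pieces. Since $v=(I+\alpha^{2}A)u$, one has $v_{t}=(I+\alpha^{2}A)u_{t}$, so \eqref{Bardina1Leray} gives
\[
u_{t}=(I+\alpha^{2}A)^{-1}f-\nu\,(I+\alpha^{2}A)^{-1}Av-(I+\alpha^{2}A)^{-1}B(u,u).
\]
Because $Av=(I+\alpha^{2}A)Au$, the middle term collapses to $\nu Au$. Combining $(a+b+c)^{2}\le 3(a^{2}+b^{2}+c^{2})$ with the contractivity bound $\|(I+\alpha^{2}A)^{-1}g\|\le\|g\|$ from \eqref{inequality-1}, one gets
\[
\|u_{t}\|^{2}\le 3\|f\|^{2}+3\nu^{2}\|Au\|^{2}+3\|B(u,u)\|^{2}.
\]

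The bulk of the work lies in controlling $\|B(u,u)\|^{2}$. I would bound the nonlinear term by Hölder together with the $L^{4}$ case of \eqref{Gagliardo-Nirenberg} applied to both $u$ and $\nabla u$:
\[
\|B(u,u)\|\le\|u\|_{L^{4}}\|\nabla u\|_{L^{4}}\le c^{2}\|u\|^{1/4}\|\nabla u\|\,\|Au\|^{3/4},
\]
so $\|B(u,u)\|^{2}\le c^{4}\|u\|^{1/2}\|\nabla u\|^{2}\|Au\|^{3/2}$. To fold the surplus factor $\|Au\|^{3/2}$ back into the viscous slot, I would apply Young's inequality with conjugate exponents $4/3$ and $4$ carrying a parameter $\epsilon>0$,
\[
\|u\|^{1/2}\|\nabla u\|^{2}\|Au\|^{3/2}\le\frac{3\epsilon^{4/3}}{4}\|Au\|^{2}+\frac{1}{4\epsilon^{4}}\|u\|^{2}\|\nabla u\|^{8},
\]
and select $\epsilon$ so that the combined $\|Au\|^{2}$ coefficient equals $6\nu^{2}$, i.e. $\epsilon^{4/3}=4\nu^{2}/(3c^{4})$. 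A direct computation then gives $\tfrac{3c^{4}}{4\epsilon^{4}}=\tfrac{81\,c^{16}}{256\,\nu^{6}}$, which matches the constant in the stated inequality.

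With these choices in hand the pointwise estimate becomes
\[
\|u_{t}\|^{2}\le 3\|f\|^{2}+6\nu^{2}\|Au\|^{2}+\frac{81\,c^{16}}{256\,\nu^{6}}\|u\|^{2}\|\nabla u\|^{8}.
\]
Integrating over $[t,t+T]$ and using $\|f(s)\|\le\sup_{s\ge 0}\|f(s)\|$ gives $3T\sup\|f\|^{2}$; the bound \eqref{inttT} supplies $6\nu^{2}\int_{t}^{t+T}\|Au\|^{2}\,ds\le\tfrac{6\nu M_{1}}{\alpha^{2}}+\tfrac{6T}{\alpha^{2}\lambda_{1}}\sup\|f\|^{2}$; and the uniform bounds $\|u\|^{2}\le M_{1}$, $\|\nabla u\|^{2}\le M_{1}/\alpha^{2}$ from Lemma~\ref{lem1} produce $\int_{t}^{t+T}\|u\|^{2}\|\nabla u\|^{8}\,ds\le T M_{1}^{5}/\alpha^{8}$. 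Regrouping the $\sup\|f\|^{2}$ contributions into $3T\bigl(1+\tfrac{2}{\alpha^{2}\lambda_{1}}\bigr)\sup\|f\|^{2}$ yields the stated bound.

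The chief obstacle I anticipate is choosing the Hölder/Gagliardo--Nirenberg chain for $B(u,u)$ in tandem with the Young exponent so that the residual takes precisely the form $\|u\|^{2}\|\nabla u\|^{8}$: this exact power pair is what allows the $L^{\infty}_{t}$ bounds from Lemma~\ref{lem1} to produce the desired $M_{1}^{5}/\alpha^{8}$ factor, and any other balance would cost either a power of $M_{1}$ or a power of $\alpha$. The remaining steps are routine integration and bookkeeping.
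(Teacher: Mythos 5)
Your proposal is correct and follows essentially the same route as the paper: apply $(I+\alpha^{2}A)^{-1}$ to isolate $u_{t}$, use the contractivity bound from \eqref{inequality-1}, estimate $\|B(u,u)\|$ via the same $L^{4}$ Gagliardo--Nirenberg chain, absorb $\|Au\|^{3/2}$ by Young's inequality with exponents $4/3$ and $4$, and integrate using Lemma~\ref{lem1}. Your version (squaring before applying Young) in fact reproduces the paper's final constants $6\nu^{2}$ and $\tfrac{81}{256}\tfrac{c^{16}}{\nu^{6}}$ more cleanly than the paper's own intermediate display.
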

where $c$ is given in  (\ref{Gagliardo-Nirenberg}) and $M_{1}$ in (\ref{M1}).

\begin{proof}
     Applying the inverse operator $(I+\alpha^2A)^{-1}$ in \eqref{Bardina1Leray}, we obtain
$$u_t+\nu Au+(I+\alpha^2)^{-1}B(u,u)=(I+\alpha^2A)^{-1}f.$$
Then,
\begin{eqnarray}
   \|u_t\|\leq & \nu\|Au\|+\|(I+\alpha^2A)^{-1}B(u,u)\|+\|(I+\alpha^2A)^{-1}f\|\nonumber\\
    \leq & \hspace{-3.8cm} \nu\|Au\|+\|B(u,u)\|+\|f\|\nonumber\\
    \leq & \nu\|Au\|+c^2\|u\|^{1/4}\|\nabla u\|^{3/4}\|\nabla u\|^{1/4}\|Au\|^{3/4}+\|f\|\nonumber\\
    = &\hspace{-1.8cm} \nu \|Au\|+c^2\|u\|^{1/4}\|\nabla u\|\, \|Au\|^{3/4}+\|f\|\nonumber\\
    \leq &\hspace{-3cm} 2\nu\|Au\|+\frac{27}{256}\frac{c^{8}\|u\|\,\|\nabla u\|^4}{\nu^3}+\|f\|.\nonumber
\end{eqnarray}
Therefore 
$$\|u_t\|^2\leq 6\nu^2\|Au\|^2+\frac{81}{256}\frac{c^{16}\|u\|^2\,\|\nabla u\|^8}{\nu^6}+3\|f\|^2.$$
From Lemma \ref{lem1}, we get
\begin{eqnarray}
   \ds\int_t^{t+T}\|u_t(s)\|^2ds\leq & \hspace{-2cm}6\nu^2\left[\ds\frac{M_1}{\nu\alpha^2}+\ds\frac{T}{\alpha^2\nu^2\lambda_1}\sup\limits_{s\geq 0}\|f(s)\|^2\right]\nonumber\\
    + & \ds\frac{81}{256}\ds\frac{c^{16}}{\nu^6}\left[M_1\cdot\left(\ds\frac{M_1}{\alpha^2}\right)^4\right]T+3T\sup\limits_{s\geq 0}\|f(s)\|^2. \nonumber
\end{eqnarray}
\end{proof}

Finally, we give an estimate for the global unique solution of Navier-Stokes-$\alpha$ system \eqref{NSalpha1Leray} in Lemma \ref{lem7}.

\begin{lemma}\label{lem7}
    For the time-derivative $u_t$ of the solution to the system \eqref{NSalpha1Leray}, we have the following estimate:
\begin{eqnarray}
\ds\int_t^{t+T}\|u_{t}(s)\|^{2}ds &\leq &
\ds\frac{6}{\alpha^2}\left(\frac{M_1}{\nu}+\frac{T}{\nu^2\lambda_1}\sup\limits_{s\geq 0}\|f(s)\|^2\right)\left(\nu^2+\ds\frac{c^2M_1}{16\pi^4\lambda_1^{1/2}\alpha^2}\right)\nonumber\\
&+&\ds\frac{3c^2M_1^2}{16\pi^4\lambda_1^{1/2}\alpha^6}\cdot T+3T\sup\limits_{s\geq 0}\|f(s)\|^2,\label{u_tNSalpha}
\end{eqnarray}
where $c$ is given in  (\ref{Gagliardo-Nirenberg}) and $M_{1}$ in (\ref{M1}). 
\end{lemma}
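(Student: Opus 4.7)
The strategy mirrors the one used in Lemma \ref{lem3} for the Bardina system, but must be adapted to the more delicate nonlinearity $\tilde{B}(u,v)$. First I would apply the inverse Helmholtz operator $(I+\alpha^{2}A)^{-1}$ to (\ref{NSalpha1Leray}). Since $A$ commutes with $(I+\alpha^{2}A)^{-1}$ and $v=(I+\alpha^{2}A)u$, the viscous term collapses to $\nu Au$, giving
\[
u_{t}=-\nu Au-(I+\alpha^{2}A)^{-1}\tilde{B}(u,v)+(I+\alpha^{2}A)^{-1}f.
\]
Taking $L^{2}$ norms and using the triangle inequality, together with the first estimate in (\ref{inequality-1}) for the forcing contribution, reduces the task to bounding the nonlinear term.

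For the bilinear piece I would invoke (\ref{bilin6}) directly, obtaining $\|(I+\alpha^{2}A)^{-1}\tilde{B}(u,v)\|\leq c^{2}\lambda_{1}^{-1/4}\alpha^{-2}\|\nabla u\|\|v\|$, and then split $\|v\|\leq \|u\|+\alpha^{2}\|Au\|$ via the triangle inequality on $v=u+\alpha^{2}Au$. Lemma \ref{lem1} provides the pointwise controls $\|u\|\leq \sqrt{M_{1}}$ and $\|\nabla u\|\leq \sqrt{M_{1}}/\alpha$, so the term $\|\nabla u\|\|u\|$ produces a pure constant of order $M_{1}/\alpha^{3}$, while $\alpha^{2}\|\nabla u\|\|Au\|$ leaves an unavoidable factor of $\|Au\|$ with coefficient proportional to $\sqrt{M_{1}}/\alpha$. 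Collecting everything yields a pointwise estimate of the form
\[
\|u_{t}\|\leq \Bigl(\nu+K_{1}\alpha^{-1}\sqrt{M_{1}}\Bigr)\|Au\|+K_{2}M_{1}\alpha^{-3}+\|f\|
\]
for absolute constants $K_{1},K_{2}$. Squaring with $(a+b+c)^{2}\leq 3(a^{2}+b^{2}+c^{2})$ and $(\nu+K_{1}\alpha^{-1}\sqrt{M_{1}})^{2}\leq 2\nu^{2}+2K_{1}^{2}M_{1}/\alpha^{2}$ produces a pointwise bound of the exact shape
\[
\|u_{t}\|^{2}\leq 6\Bigl(\nu^{2}+\tfrac{c^{2}M_{1}}{16\pi^{4}\lambda_{1}^{1/2}\alpha^{2}}\Bigr)\|Au\|^{2}+\tfrac{3c^{2}M_{1}^{2}}{16\pi^{4}\lambda_{1}^{1/2}\alpha^{6}}+3\|f\|^{2}.
\]
Integrating from $t$ to $t+T$ and substituting the energy bound (\ref{inttT}) from Lemma \ref{lem1} for $\int_{t}^{t+T}\|Au(s)\|^{2}\,ds$ then produces (\ref{u_tNSalpha}) line by line.

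The main obstacle is that, unlike the Bardina case where $B(u,u)$ depends only on $u$, the NS-$\alpha$ coupling forces $v$ into the estimate; since $\|v\|$ is not a priori bounded in terms of $M_{1}$ alone, an extra $\|Au\|$-factor is created that combines additively with the viscous $\nu\|Au\|$. This is exactly why the coefficient in front of $\int\|Au\|^{2}$ in (\ref{u_tNSalpha}) is $\nu^{2}+c^{2}M_{1}/(16\pi^{4}\lambda_{1}^{1/2}\alpha^{2})$ rather than just $\nu^{2}$, and it is the only conceptually nontrivial feature of the lemma. Carrying the absolute constants through the Gagliardo--Nirenberg bilinear estimate so that they assemble into the precise $\tfrac{1}{16\pi^{4}\lambda_{1}^{1/2}}$ prefactor is the remaining, purely bookkeeping, task.
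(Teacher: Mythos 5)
Your proposal follows essentially the same route as the paper: apply $(I+\alpha^{2}A)^{-1}$, bound the nonlinearity via the negative-norm bilinear estimate with the splitting $\|v\|\le\|u\|+\alpha^{2}\|Au\|$, insert the pointwise bounds from Lemma \ref{lem1}, square, and integrate using \eqref{inttT}. The only (cosmetic) discrepancy is that quoting \eqref{bilin6} verbatim gives a prefactor $c^{2}\lambda_{1}^{-1/4}\alpha^{-2}$ rather than the $\tfrac{c}{4\pi^{2}\lambda_{1}^{1/4}\alpha^{2}}$ the paper uses in this proof, so the squared constant would read $c^{4}$ instead of $c^{2}/(16\pi^{4})$ --- pure bookkeeping, as you note.
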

\begin{proof} Applying the operator $(I+\alpha^2A)^{-1}$ in \eqref{NSalpha1Leray}, we have the following equivalent Navier-Stokes-$\alpha$ equations:
$$
\frac{d u}{d t}+\nu A u+(I+\alpha^{2} A)^{-1}\tilde{B}(u, u+\alpha^{2} A u)=(I+\alpha^{2} A)^{-1} f.
$$
Then 

\begin{eqnarray}
 \|u_{t}\|_{L^{2}} & \leq \nu\|A u\|+\|(I+\alpha^{2}A)^{-1} f\|+\|(I+\alpha^{2} A)^{-1}\tilde{B}(u, u+\alpha^{2} Au)\| \nonumber\\
&\hspace{-2cm} \leq \nu\|Au\|+\|f\|+\ds\frac{1}{4\pi^{2}\alpha^{2}}\|\tilde{B}(u, u+\alpha^{2} Au)\|_{H^{-2}}\nonumber \\
&\hspace{-2cm} \leq \nu\|A u\|+\ds\frac{c}{4\pi^2\alpha^2\lambda_{1}^{1/4}}\|\nabla 
 u\|\,\|u+\alpha^{2} Au\|+\|f\|\nonumber \\
&\hspace{-1cm} \leq \nu\|A u\|+\ds\frac{c}{4\pi^2\alpha^2\lambda_{1}^{1/4}}\|\nabla u\| (\| u \| + \alpha^2 \| Au \| )+\|f\|. \nonumber  
\end{eqnarray}

Squaring both sides above and coupling the terms, we obtain
 $$   \|u_t\|^2 \leq  6\|Au\|^2\left(\nu^2+\frac{c^2}{16\pi^4\lambda_1^{1/2}}\|\nabla u\|^2\right)
    +\ds\frac{3c^2}{16\pi^4\alpha^4\lambda_1^{1/2}}\|\nabla u\|^2\|u\|^2+3\|f\|^2$$
   Finally, integrating the inequality above over $[t,t+T]$ and using Lemma \ref{lem1}, we obtain \eqref{u_tNSalpha}.
\end{proof}

\section{Results on the long-time error}

In this section, we present our analytical results, demonstrating that the real-state solution $u(t)$ of both models (Bardina and Navier-Stokes-$\alpha$) can be approximated by ``alternative" solutions $w(t)$, which use an approximate parameter $\beta > 0$ instead of the original $\alpha > 0$. Specifically, we show that the $L^2$-error between these solutions decays exponentially over time, with the exception of an error that is controlled by the difference between the parameters $\beta$ and $\alpha$.\\

Theorems \ref{teoBardina} refers to Bardina system, while Theorem \ref{teoNSalpha} is related to Navier-Stokes-$\alpha$ equations.

\begin{theorem}\label{teoBardina}
 Let $u_0,w_0\in V$ be initial data of the systems \eqref{Bardina1Leray} and \eqref{Bardina1LerayCDA} respectively, and $u$ and $w$ be the respective global solutions. Suppose the parameters $\alpha,\beta>0$,\, $\eta$ large enough and $h$ small enough such that the conditions below are satisfied:
    \begin{enumerate}
    \item $\eta> \ds\frac{27c^4}{16}\left(\frac{3}{\nu}\right)^3\ds\frac{M_1^2}{\alpha^4},$
        \item $2\eta c_1 h^2<\nu$ and $2\eta c_2h^4<\nu\beta^2$;
         
           \end{enumerate}
where $c$ is given in (\ref{Gagliardo-Nirenberg}), $M_{1}$ in (\ref{M1}) and $c_1$ and $c_2>0$ in \eqref{Ih}.  Then, the following inequality for the difference between physical and assimilated solutions, that is, $g(t):=w(t)-u(t)$, is valid  for all $t \geq 0$:
\begin{align}\label{error}
    \|g(t)\|^2 + \beta^2\|\nabla g(t)\|^2 \leq & e^{-\frac{\lambda_1\nu}{2}t}(\|g(0)\|^2+\beta^2\|\nabla g(0)\|^2)
 \nonumber \\&+ \ds\frac{2e}{e^{1/2}-1}\cdot \ds\frac{|\beta^2-\alpha^2|^2}{\beta^2}M_{2},
\end{align}
where
 \begin{align}
   M_{2}=\ds\frac{7M_1}{\alpha^2}+\frac{81}{256}\frac{c^{16}M_1^{5}}{\nu^8\alpha^2\lambda_1}+\left(\ds\frac{1}{\nu^2\lambda_1}+\frac{3}{\nu^2\lambda_1^2\alpha^2}\right)\sup\limits_{s\geq 0}\|f(s)\|^2.
\end{align}
\end{theorem}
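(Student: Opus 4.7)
The plan is to derive an energy-type inequality for $g(t):=w(t)-u(t)$ and then invoke the Gronwall inequality of Lemma \ref{Gronwall}. Subtracting \eqref{Bardina1Leray} from \eqref{Bardina1LerayCDA} and observing that $z-v=g+\beta^2Ag+(\beta^2-\alpha^2)Au$, the difference satisfies
\begin{equation*}
\frac{d}{dt}\bigl[g+\beta^2Ag+(\beta^2-\alpha^2)Au\bigr]+\nu A\bigl[g+\beta^2Ag+(\beta^2-\alpha^2)Au\bigr]+B(w,w)-B(u,u)=-\eta\mathcal{P}(I_h g).
\end{equation*}
Testing against $g$ in $L^2$, grouping the time derivatives into $\frac{1}{2}\frac{d}{dt}(\|g\|^2+\beta^2\|\nabla g\|^2)+(\beta^2-\alpha^2)(u_t,Ag)$, and splitting $B(w,w)-B(u,u)=B(w,g)+B(g,u)$ so that the orthogonality \eqref{zero} kills $(B(w,g),g)$, I would obtain an identity of the form
\begin{equation*}
\tfrac{1}{2}\tfrac{d}{dt}(\|g\|^2+\beta^2\|\nabla g\|^2)+\nu\|\nabla g\|^2+\nu\beta^2\|Ag\|^2+\eta\|g\|^2=\mathrm{RHS},
\end{equation*}
where the RHS collects the two parameter-mismatch cross terms $-(\beta^2-\alpha^2)(u_t,Ag)-\nu(\beta^2-\alpha^2)(Au,Ag)$, the nonlinear residue $-(B(g,u),g)$, and the interpolation remainder $\eta(g-I_h g,g)$ arising from $-\eta(I_h g,g)=-\eta\|g\|^2+\eta(g-I_h g,g)$.

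The heart of the proof is the absorption phase. For the two cross terms I would use Young's inequality dualized against $\|Ag\|$, which extracts a total of $\tfrac{\nu\beta^2}{4}\|Ag\|^2$ and leaves a remainder of order $\tfrac{|\beta^2-\alpha^2|^2}{\beta^2}\bigl(\nu^{-1}\|u_t\|^2+\nu\|Au\|^2\bigr)$. The nonlinear term is controlled by the $3$D Gagliardo--Nirenberg inequality \eqref{Gagliardo-Nirenberg}, giving $|(B(g,u),g)|\leq c\|g\|^{1/2}\|\nabla g\|^{3/2}\|\nabla u\|$; Young's inequality with exponents $4/3$ and $4$, tuned so that only $\tfrac{\nu}{6}\|\nabla g\|^2$ of the viscous dissipation is consumed, leaves a residue $\tfrac{729c^4}{32\nu^3}\|\nabla u\|^4\|g\|^2$, which via the uniform bound $\|\nabla u\|^2\leq M_1/\alpha^2$ from Lemma \ref{lem1} becomes $\tfrac{729c^4 M_1^2}{32\nu^3\alpha^4}\|g\|^2$; this is precisely the form that condition (1) permits to be absorbed into $\tfrac{1}{2}\eta\|g\|^2$. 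For the interpolation piece, Cauchy--Schwarz followed by \eqref{Ih} yields $\eta|(g-I_h g,g)|\leq\tfrac{\eta}{2}\|g\|^2+\tfrac{\eta c_1 h^2}{2}\|g\|^2+\tfrac{\eta c_2 h^4}{2}\|Ag\|^2$, and condition (2) allows the last two pieces to be absorbed into the $\nu\beta^2\|Ag\|^2$ and $\eta\|g\|^2$ dissipation.

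After absorption, the LHS retains at least $\tfrac{\nu}{4}\|\nabla g\|^2+\tfrac{\nu\beta^2}{4}\|Ag\|^2$, and the Poincaré inequality \eqref{Poincare} bounds this from below by $\tfrac{\nu\lambda_1}{4}(\|g\|^2+\beta^2\|\nabla g\|^2)$. Setting $\xi(t):=\|g(t)\|^2+\beta^2\|\nabla g(t)\|^2$ and doubling, I arrive at $\frac{d\xi}{dt}+\tfrac{\nu\lambda_1}{2}\xi\leq\beta(t)$ with $\beta(t)$ of order $\tfrac{|\beta^2-\alpha^2|^2}{\beta^2}\bigl(\nu^{-1}\|u_t\|^2+\nu\|Au\|^2\bigr)$. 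Taking $T:=1/(\nu\lambda_1)$ in Lemma \ref{Gronwall} makes $CT=1/2$, so the Gronwall prefactor $e^{2CT}/(e^{CT}-1)$ equals $e/(e^{1/2}-1)$; bounding $\sup_t\int_t^{t+T}\beta(s)\,ds$ via Lemma \ref{lem1} (for the $\|Au\|^2$ contribution) and Lemma \ref{lem3} (for the $\|u_t\|^2$ contribution) reassembles the constant $M_2$ in the stated form, the leading factor $2e/(e^{1/2}-1)$ in \eqref{error} accounting for the doubling.

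The main obstacle is the simultaneous balancing of Young's-inequality weights so that all three RHS contributions absorb into the three dissipations $\nu\|\nabla g\|^2$, $\nu\beta^2\|Ag\|^2$ and $\eta\|g\|^2$. In particular, the Young split for the nonlinearity must be tuned exactly so that the residual constant matches the form $\tfrac{27c^4}{16}(3/\nu)^3 M_1^2/\alpha^4$ appearing in condition (1), and the $\nu\beta^2\|Ag\|^2$ budget must be partitioned among the two cross terms and the Bessel-type term $\tfrac{\eta c_2 h^4}{2}\|Ag\|^2$ so that condition (2) suffices. Settling for only half of the available Poincaré reduction at the final step is what yields the exponential decay rate $\nu\lambda_1/2$ in \eqref{error} rather than the nominal $\nu\lambda_1$.
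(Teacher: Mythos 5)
Your proposal is correct and follows essentially the same route as the paper: the same difference equation tested against $g$, the same term-by-term Young/Gagliardo--Nirenberg/interpolant estimates absorbed by conditions (1) and (2), Poincar\'e to get the decay rate $\nu\lambda_1/2$, and Lemma \ref{Gronwall} with $T=1/(\nu\lambda_1)$ combined with Lemmas \ref{lem1} and \ref{lem3} to assemble $M_2$. The only blemish is a typo in the interpolation estimate, where $\tfrac{\eta c_1h^2}{2}\|g\|^2$ should read $\tfrac{\eta c_1h^2}{2}\|\nabla g\|^2$ (absorbed by $2\eta c_1h^2<\nu$ into the $\nu\|\nabla g\|^2$ dissipation).
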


\begin{proof} Subtracting \eqref{Bardina1Leray} from \eqref{Bardina1LerayCDA} yields
\begin{align}\label{eqdifference}
\frac{d}{dt}(g+\beta^2Ag+(\beta^2-\alpha^2)Au)&+\nu A(g+\beta^2Ag+(\beta^2-\alpha^2)Au)+B(g,u)\nonumber \\ &+B(w,g)=-\eta \mathcal{P}I_hg, 
\end{align}
with $\mbox{div}\, g=0$. Taking the $D^{'}$-dual action  with $g$ in \eqref{eqdifference}, we get
\begin{align}
    \frac{1}{2}\frac{d}{dt}(\|g\|^2+\beta^2\|\nabla g\|^2)&+\nu(\|\nabla g\|^2+\beta^2\|Ag\|^2) = (\alpha^2-\beta^2)\langle\frac{d}{dt}Au,g \rangle\nonumber\\
    &- \nu(\beta^2-\alpha^2)(Au,Ag)
   -(B(g,u),g)\nonumber \\&-(B(w,g),g)-\eta(I_hg,g).
    \end{align}

Using \eqref{zero} and general Hölder's inequality, we obtain
\begin{eqnarray*}
    \frac{1}{2}\frac{d}{dt}(\|g\|^2+&\hspace{-2cm}\beta^2\|\nabla g\|^2)+\nu(\|g\|^2+\beta^2\|Ag\|^2) \leq |\alpha^2-\beta^2|\|u_t\|\,\|Ag\| \nonumber\\
     &+\nu(\beta^2-\alpha^2)\|Au\|\,\|Ag\|+c\|g\|^2_{L^4}\|\nabla u\|+\eta\|I_hg-g\|\,\|g\|-\eta\|g\|^2.\nonumber\\
    \end{eqnarray*}
We estimate each term of right-hand side above, firstly in order to get some of them absorbed by dissipation term:
$$|\alpha^2-\beta^2|\|u_t\|\,\|Ag\|\leq\ds\frac{\nu\beta^2}{4}\|Ag\|^2+\ds\frac{|\alpha^2-\beta^2|^2}{\nu\beta^2}\|u_t\|^2; $$

$$\nu(\beta^2-\alpha^2)\|Au\|\,\|Ag\|\leq \ds\frac{\nu\beta^2}{4}\|Ag\|^2+\ds\frac{\nu}{\beta^2}\|Au\|^2|\beta^2-\alpha^2|^2;$$
$$c\|g\|^2_{L^4}\|\nabla u\|\leq c\|g\|^{1/2}\|\nabla g\|^{3/2}\|\nabla u\|\leq \ds\frac{\nu}{2}\|\nabla g\|^2+\frac{c^4}{32}\left(\frac{3}{\nu}\right)^3\|g\|^2\|\nabla u\|^4;$$
$$ \eta\|I_hg-g\|\,\|g\|\leq \frac{\eta}{2}\|I_hg-g\|^2+\frac{\eta}{2}\|g\|^2\leq\ds\frac{\eta c_1h^2}{2}\|\nabla g\|^2+\frac{\eta c_2h^4}{2}\|Ag\|^2+\frac{\eta}{2}\|g\|^2.$$

Therefore
\begin{eqnarray}\label{chk}
&\ds\frac{d}{dt}(\|g\|^2+\beta^2\|\nabla g\|^2)  +(\nu-\eta c_1h^2)\|\nabla g\|^2+ (\beta^2\nu-\eta c_2 h^4)\|Ag\|^2 \nonumber\\
&\leq \left[\ds\frac{c^4}{16}\left(\frac{3}{\nu}\right)^3\|\nabla u\|^4-\eta\right]\|g\|^2+2\left(\ds\frac{1}{\nu\beta^2}\|u_t\|^2+\ds\frac{\nu}{\beta^2}\|Au\|^2\right)|\beta^2-\alpha^2|^2.\nonumber\\
\end{eqnarray}
By Lemma \ref{lem1}, we have that 
$$\|\nabla u\|^4\leq\ds\frac{M_1^2}{\alpha^4}, $$
and therefore, using conditions (1) and (2) required in the statement of Theorem, we conclude that
\begin{align*}
&\ds\frac{d}{dt}(\|g\|^2 +\beta^2\|\nabla g\|^2)+\ds\frac{\lambda_1\nu}{2}(\|g\|^2 +\beta^2\|\nabla g\|^2)\\
&\leq 2\left(\ds\frac{1}{\nu\beta^2}\|u_t\|^2+\ds\frac{\nu}{\beta^2}\|Au\|^2\right)|\beta^2-\alpha^2|^2.
\end{align*}
Denoting 
$$\xi(t)=\|g(t)\|^2+\beta^2\|\nabla g(t)\|^2,\,\,\,C=\ds\frac{\lambda_1\nu}{2},$$  $$\gamma(t)=2\ds\frac{|\beta^2-\alpha^2|^2}{\beta^2}\left(\ds\frac{1}{\nu}\|u_t(t)\|^2+\nu\|Au(t)\|^2\right),$$ 
we have that 
\begin{equation}\label{grn}
\ds\frac{d}{dt}\xi(t)+C\xi(t)\leq \gamma(t).    
\end{equation}

From Lemmas \ref{lem1} and \ref{lem3}, we have for all $T\geq t$,
\begin{eqnarray*}
&\int_t^{t+T}\gamma(s)ds\\
& = \ds\frac{2|\beta^2-\alpha^2|^2}{\nu\beta^2}\ds\int_t^{t+T}\|u_t(s)\|^2ds+2\nu\frac{|\beta^2-\alpha^2|^2}{\beta^2}\ds\int_t^{t+T}\|Au(s)\|^2ds\nonumber\\
& \leq   \ds\frac{2|\beta^2-\alpha^2|^2}{\nu\beta^2} \left[\ds\frac{6M_1\nu}{\alpha^2}+\frac{81}{256}\frac{c^{16}M_1^5T}{\nu^6\alpha^8}+3T\left(1+\ds\frac{2}{\alpha^2\lambda_1}\right)\sup\limits_{s\geq 0}\|f(s)\|^2\right]\nonumber\\
& \hspace{-2cm}+ 2\nu\ds\frac{|\beta^2-\alpha^2|^2}{\alpha^2\beta^2}\left(\ds\frac{M_1}{\nu}+\frac{T}{\nu^2\lambda_1}\sup\limits_{s\geq 0}\|f(s)\|^2\right),\nonumber
\end{eqnarray*}
Choosing $T=(\lambda_1\nu)^{-1}$, we obtain
\begin{eqnarray*}
&\int_t^{t+\frac{1}{\lambda_1\nu}}\gamma(s)\,ds\\
&\leq   \ds\frac{2|\beta^2-\alpha^2|^2}{\beta^2} \bigg[\ds\frac{6M_1}{\alpha^2}+\frac{81}{256}\frac{c^{16}M_1}{\nu^8\alpha^8\lambda_1}+\left(\ds\frac{1}{\nu^2\lambda_1}+\ds\frac{2}{\nu^2\alpha^2\lambda_1^2}\right)\sup\limits_{s\geq 0}\|f(s)\|^2\nonumber\\
& + \ds\frac{M_1}{\alpha^2}+\ds\frac{1}{\nu^2\lambda_1^2\alpha^2}\sup\limits_{s\geq 0}\|f(s)\|^2\bigg]\nonumber\\
& = \ds\frac{2|\beta^2-\alpha^2|^2}{\beta^2}\bigg[\ds\frac{7M_1}{\alpha^2}+\frac{81}{256}\frac{c^{16}M_1}{\nu^8\alpha^8\lambda_1} +\left(\ds\frac{1}{\nu^2\lambda_1}+\ds\frac{3}{\nu^2\alpha^2\lambda_1^2}\right)\sup\limits_{s\geq 0}\|f(s)\|^2\bigg].\nonumber
\end{eqnarray*}
Using Lemma \ref{Gronwall} applied to \eqref{grn}, with  $$M=\ds\frac{2|\beta^2-\alpha^2|^2}{\beta^2}\bigg[\ds\frac{7M_1}{\alpha^2}+\frac{81}{256}\frac{c^{16}M_1}{\nu^8\alpha^8\lambda_1} +\left(\ds\frac{1}{\nu^2\lambda_1}+\ds\frac{3}{\nu^2\alpha^2\lambda_1^2}\right)\sup\limits_{s\geq 0}\|f(s)\|^2\bigg], $$

we finally conclude the estimated error \eqref{error}.

\end{proof}

\begin{theorem}\label{teoNSalpha}
Let $u_0,w_0\in V$ and $u$ and $w$ be solutions to the systems \eqref{NSalpha1Leray} and \eqref{NSalpha1LerayCDA}, respectively, with initial data $u_0$ and $w_0$. Assume that the following conditions are satisfied:
\begin{enumerate}
\item $\eta> \frac{576c^{8}M_{1}^{2}}{\alpha^{4} \nu^{3}}$; 
\item $\eta c_{1}h^{2}+\ds\frac{\eta^{2}c_{1}\beta^{2}h^{2}}{\nu}+\frac{432c^{8}\beta^{2}M^{2}_{1}}{\nu^{3}\alpha^{4}}-\eta\beta^{2}<\frac{11\nu}{16};$
\item $\eta c_{2}h^{4}+\ds\frac{\eta^{2}\beta^{2}c_{2}h^{4}}{\nu}< \frac{\nu}{4}\beta^{2}.$ 
 \end{enumerate}
where $c$ is given in (\ref{Gagliardo-Nirenberg}), $M_{1}$ in (\ref{M1}) and $c_1$ and $c_2>0$ in \eqref{Ih}. Then, the following inequality for the difference between physical and assimilated solutions, that is, $g(t):=w(t)-u(t)$, is valid for all  $t \geq 0$:
\begin{align}
\|g(t)\|^{2}+\beta^{2}\|\nabla g(t)\|^{2}  \le & e^{-\frac{\lambda_{1} \nu}{2} t}\left(\|g(0)\|^{2}+\beta^{2}\|\nabla g(0)\|^{2}\right) \nonumber\\
& +\left(\frac{2e}{e^{1 / 2}-1}\right)\frac{|\beta^2-\alpha^2|^2}{\beta^2}M_{3}, \label{teo22}
\end{align}
where
\begin{align}
M_{3} &  := \ds\frac{4}{\nu}\Big[\frac{6}{\alpha^2}\Big(\nu^2+\frac{c^4M_1}{\lambda_1^{1/2}\alpha^2}\Big)\Big(\frac{M_1}{\nu}+\frac{1}{\nu^3\lambda_1^2}\sup\limits_{s\geq 0}\|f(s)\|^2 \Big)\nonumber\\
& +\frac{3c^4M_1^2}{\alpha^6\nu\lambda_1^{3/2}}+\frac{3}{\nu\lambda_1}\sup\limits_{s\geq 0}\|f(s)\|^2\Big]+\frac{4\nu}{\alpha^2}\Big(\frac{M_1}{\nu}+\frac{1}{\nu^3\lambda_1^2}\sup\limits_{s\geq 0}\|f(s)\|^2 \Big)\nonumber\\
& + \ds\frac{2c^4M_1}{\nu\alpha^4 \lambda^{\frac{1}{2}_{1}}}\Big(\frac{M_1}{\nu}+\frac{1}{\nu^3\lambda_1^2}\sup\limits_{s\geq 0}\|f(s)\|^2 \Big).\label{MMM}
\end{align}
\end{theorem}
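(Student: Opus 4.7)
My plan is to mirror the structure of the proof of Theorem~\ref{teoBardina}, adapting it to the more delicate NS-$\alpha$ nonlinearity $\tilde{B}$ and to the modified feedback $(I+\beta^2 A)I_h g$. First, I would subtract \eqref{NSalpha1Leray} from \eqref{NSalpha1LerayCDA} with $g:=w-u$. Since $z-v = g + \beta^2 Ag + (\beta^2-\alpha^2)Au$, the difference equation reads
\[
\ds\frac{d}{dt}\bigl(g+\beta^{2}Ag+(\beta^{2}-\alpha^{2})Au\bigr)+\nu A\bigl(g+\beta^{2}Ag+(\beta^{2}-\alpha^{2})Au\bigr)+\tilde{B}(w,z)-\tilde{B}(u,v)=-\eta(I+\beta^{2}A)\mathcal{P}I_h g.
\]
Taking the $D'$-dual action with $g$, the linear terms produce, exactly as in the Bardina case, $\tfrac12\tfrac{d}{dt}(\|g\|^2+\beta^2\|\nabla g\|^2)+\nu(\|\nabla g\|^2+\beta^2\|Ag\|^2)$ plus mismatch terms $(\beta^{2}-\alpha^{2})(u_t,Ag)$ and $\nu(\beta^{2}-\alpha^{2})(Au,Ag)$ that are split via Young's inequality so the $\|Ag\|^2$ pieces are absorbed into $\nu\beta^2\|Ag\|^2$ and $|\beta^2-\alpha^2|^2(\|u_t\|^2+\nu^2\|Au\|^2)/\beta^2$ is left on the right.

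For the nonlinear term I would use the decomposition $\tilde{B}(w,z)-\tilde{B}(u,v)=\tilde{B}(g,z)+\tilde{B}(u,z-v)$. The first piece vanishes against $g$ by the cancellation $\langle \tilde{B}(g,z),g\rangle_{V'}=0$ (first and third arguments agree), and the second expands as $\langle \tilde{B}(u,g),g\rangle + \beta^2\langle \tilde{B}(u,Ag),g\rangle + (\beta^2-\alpha^2)\langle \tilde{B}(u,Au),g\rangle$. Each summand is treated via the antisymmetry $\langle \tilde{B}(a,b),c\rangle=-\langle \tilde{B}(c,b),a\rangle$, the bilinear estimate \eqref{bilin6}, and Gagliardo-Nirenberg \eqref{Gagliardo-Nirenberg}; the bound on $\|\nabla u\|^2$ from Lemma~\ref{lem1} yields $\|\nabla u\|^2\le M_1/\alpha^2$ and puts a coefficient of order $c^{8}M_{1}^{2}/(\alpha^4\nu^3)$ on $\|g\|^2$ (to be absorbed by $\eta\|g\|^2$ through condition~(1), which is exactly $\eta>576c^{8}M_{1}^{2}/(\alpha^4\nu^3)$) and a coefficient of order $c^{8}\beta^{2}M_{1}^{2}/(\nu^3\alpha^4)$ on $\|\nabla g\|^2$ (the $432$ factor in condition~(2)).

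For the feedback I expand $-\eta((I+\beta^{2}A)I_h g,g) = -\eta(I_h g,g)-\eta\beta^2(AI_h g,g)$ and write $I_h g = g + (I_h g - g)$, producing the dissipation $-\eta\|g\|^2-\eta\beta^2\|\nabla g\|^2$ plus two error pairings. Applying Young's inequality to $\eta(g-I_h g,g)$ and $\eta\beta^2(I_h g - g,Ag)$ so that the $\|Ag\|^2$ contribution is absorbed into $\nu\beta^2\|Ag\|^2/4$, the errors become $(\eta/2)\|I_h g-g\|^2$ and $(\eta^2\beta^2/\nu)\|I_h g-g\|^2$. Invoking \eqref{Ih} then yields coefficients $\eta c_1 h^2 + \eta^2 c_1\beta^2 h^2/\nu$ on $\|\nabla g\|^2$ and $\eta c_2 h^4 + \eta^2\beta^2 c_2 h^4/\nu$ on $\|Ag\|^2$, which are exactly the left-hand sides of conditions~(2) and~(3). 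With all bad terms absorbed, I obtain
\[
\ds\frac{d}{dt}(\|g\|^2+\beta^2\|\nabla g\|^2)+\ds\frac{\lambda_1\nu}{2}(\|g\|^2+\beta^2\|\nabla g\|^2)\le \gamma(t),\qquad \gamma(t)\lesssim \ds\frac{|\beta^2-\alpha^2|^{2}}{\beta^{2}}\bigl(\|u_t\|^2/\nu+\nu\|Au\|^2+\|\nabla u\|^2\|Au\|^2/\alpha^4\bigr),
\]
and Lemmas~\ref{lem1} and~\ref{lem7} control $\int_t^{t+T}\gamma(s)\,ds$ with $T=(\lambda_1\nu)^{-1}$, producing a constant matching $M_3$ in \eqref{MMM}. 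Lemma~\ref{Gronwall} then delivers \eqref{teo22}.

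The main obstacle will be the estimate of $\beta^2\langle \tilde{B}(u,Ag),g\rangle$, since $\tilde{B}(u,Ag)$ encodes $\nabla\times Ag$, a third derivative of $g$. I would rewrite this via the vector identity $a\cdot(b\times c)=c\cdot(a\times b)$ and the curl identity $\nabla\times(u\times g)=(g\cdot\nabla)u-(u\cdot\nabla)g$ (using $\nabla\cdot u=\nabla\cdot g=0$), moving one derivative off $Ag$ by integration by parts to reach integrals $\int \Delta g\cdot(g\cdot\nabla)u$ and $\int \Delta g\cdot(u\cdot\nabla)g$; Gagliardo-Nirenberg then bounds them by $\|Ag\|\,\|\nabla g\|^{1/2}\|\nabla u\|^{1/2}\|Au\|^{1/2}$-type quantities, which split via Young into $\nu\beta^2\|Ag\|^2/4$ (absorbed) plus the coefficients on $\|g\|^2$ and $\|\nabla g\|^2$ announced above. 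Careful book-keeping of these constants is precisely what produces the $576$ and $432$ in conditions (1)-(2).
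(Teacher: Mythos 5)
Your proposal follows essentially the same route as the paper: the same difference equation and energy pairing with $g$, an equivalent decomposition of $\tilde{B}(w,z)-\tilde{B}(u,v)$ (your $\tilde{B}(g,z)+\tilde{B}(u,z-v)$ is the paper's three-term split after using $\langle\tilde{B}(g,\cdot),g\rangle=0$), the identical splitting of the feedback term $-\eta(I+\beta^2A)I_hg$ producing exactly the coefficients in conditions (2)--(3), and the same Gronwall argument via Lemmas \ref{lem1}, \ref{lem7} with $T=(\lambda_1\nu)^{-1}$. You also correctly anticipate that the delicate term $\beta^2\langle\tilde{B}(u,Ag),g\rangle$ must be handled by moving a derivative off $Ag$ via the curl identities, which is precisely what underlies the paper's bound $c\beta^2\|\nabla u\|\,\|Ag\|^{3/2}\|\nabla g\|^{1/2}$ in \eqref{est4}.
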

\begin{proof}
Taking the difference \eqref{NSalpha1LerayCDA}$-$ \eqref{NSalpha1Leray}, we get 
$$\begin{array}{rcl}
 \ds\frac{\mathrm{d} }{\mathrm{d} t} (g+\beta^{2}Aw-\alpha^{2}Au)&\hspace{-0.3cm}+&\hspace{-0.3cm}\nu\left[Ag+A(\beta^{2}Aw-\alpha^{2}Au)\right]\\
 &\hspace{-0.3cm}+&\hspace{-0.3cm}\tilde{B}(w,z)-\tilde{B}(u,v)=-\eta\mathcal{P}(I+\beta^{2}A)I_{h}g, 
\end{array}$$

with $\mbox{div}z,g=0$. Rewriting the following above terms:

$$g+\beta^{2}Aw-\alpha^{2}Au=g+\beta^{2}Ag+(\beta^{2}-\alpha^{2})Au;$$
$$Ag+A(\beta^{2}Aw-\alpha^{2}Au)=Ag+A\left [\beta^{2}Ag+(\beta^{2}-\alpha^{2})Au\right ],$$

yields
$$\begin{array}{rcl}
  \ds\frac{\mathrm{d} }{\mathrm{d} t} (g+\beta^{2}Ag+(\beta^{2}-\alpha^{2})Au)   &\hspace{-0.3cm} + &\hspace{-0.3cm} \nu A\left[g+\beta^{2}Ag+(\beta^{2}-\alpha^{2})Au\right]\\
     &\hspace{-0.3cm} + & \hspace{-0.3cm} \tilde{B}(w,z)-\tilde{B}(u,v)=-\eta P I_{h}g-\eta \beta^{2}A I_{h}g,
     \end{array}$$

Taking the duality $\left \langle\cdot ,g \right \rangle _{D'} $,\;we get

\begin{eqnarray}
 \frac{1}{2}\frac{\mathrm{d} }{\mathrm{d} t}  (\left \| g \right \| ^{2}+\beta^{2}\left \| \nabla g \right \| ^{2})&+&\hspace{-0.2cm}(\beta^{2}-\alpha^{2})\left \langle \frac{\mathrm{d} }{\mathrm{d} t} Au,g \right \rangle_{D'} +\nu ( \| \nabla g \| ^{2}+\beta^{2} \|  Ag  \| ^{2}) 
 \nonumber\\
 &+&\hspace{-0.2cm} \nu (\beta^{2}-\alpha^{2})(Au,Ag)+(\tilde{B}(w,z)-\tilde{B}(u,v) ,g)\nonumber\\
 &=&\hspace{-0.2cm}-\eta(I_{h}g,g)-\eta \beta^{2}\left \langle A I_{h}g,g \right \rangle _{D(A)'}.  \nonumber    
\end{eqnarray}
 Writing $z-v=g+\beta^{2}Ag+(\beta^{2}-\alpha^{2})Au$, we rewrite the nonlinear term as

\begin{eqnarray}
\tilde{B} (w,z)-\tilde{B} (u,v) & = &\hspace{-0.2cm} \tilde{B} (g,g+\beta^{2}Ag+(\beta^{2}-\alpha^{2})Au)+\tilde{B} (g,v)\nonumber\\
& + &\hspace{-0.2cm} \tilde{B} (u,g+\beta^{2}Ag+(\beta^{2}-\alpha^{2})Au).\label{red}
\end{eqnarray}
Besides, using \eqref{zero}, we have that \eqref{red} reduces to 

$$(\tilde{B} (w,z)-\tilde{B} (u,v),g)
=(\tilde{B} (u,g),g)+\beta^{2}(\tilde{B} (u,Ag),g)+(\beta^{2}-\alpha^{2})(\tilde{B} (u,Au),g).
$$
Thus
\begin{align}
&\frac{1}{2}\frac{\mathrm{d} }{\mathrm{d} t}  (\left \| g \right \| ^{2}+\beta^{2}\left \| \nabla g \right \| ^{2})+\nu (\left \| \nabla g \right \| ^{2}+\beta^{2}\left \| A g \right \| ^{2})\nonumber\\
&\le  | \beta^{2}-\alpha^{2}  | \left | \left \langle \frac{\mathrm{d} }{\mathrm{d} t} Au,g  \right \rangle  \right | +\nu | \beta^{2}-\alpha^{2} |\,| (Au,Ag) | +| \tilde{B} (u,g),g|\nonumber \\
&+\beta^{2}\ | ( \tilde{B} (u,Ag),g ) | + | \beta^{2}-\alpha^{2}|\,| (\tilde{B}(u,Au),g)| -\eta (I_{h}g,g)-\eta \beta^{2}(I_{h}g,Ag).\label{stp1}
\end{align}

Using \eqref{Gagliardo-Nirenberg}, Young and Hölder's inequality, we estimate now the right-hand side terms above: 
\begin{align}
 | \beta^{2}-\alpha^{2}|\, \left | \left \langle \frac{\mathrm{d} }{\mathrm{d} t} Au,g  \right \rangle  \right | & \le | \beta^{2}-\alpha^{2}| \left \| u_{t} \right \| \left \| A_{g} \right \|\nonumber \\
&\le \frac{4 | \beta^{2}-\alpha^{2}  | ^{2} }{\nu \beta^{2}}\left \| u_{t} \right \|^{2}+\frac{\nu \beta^{2}}{16}\left \| Ag \right \| ^{2},  \label{es1}
\end{align}
\begin{align}
\nu  | \beta^{2}-\alpha^{2}  |\left | (Au,Ag) \right | \le \frac{\nu \beta^{2}}{16}\left \| Ag \right \| ^{2}+4\nu \frac{ | \beta^{2}-\alpha^{2}  |^{2} }{\beta^{2}} \left \| Au \right \|^{2}, \label{es2}
\end{align}
\begin{align}
 | ( \tilde{B} (u,g),g )| 
&\le c\left \| u \right \| _{L^{6}}\left \| \nabla g \right \| _{L^{2}}\left \| g \right \| _{L^{3}} \nonumber\\
&\le c\left \|\nabla  u \right \| _{L^{2}}\left \| \nabla g \right \| _{L^{2}} ^{\frac{3}{2}} \left \| g \right \| _{L^{2}} ^{\frac{1}{2}} \nonumber\\ 
&\le\frac{\nu }{16}\left \| \nabla g \right \| _{L^{2}} ^{2}+\frac{432c^{4}}{\nu ^{3}}   \left \| \nabla u \right \| ^{4} \left \| g \right \| ^{2}, \label{est3}
\end{align}
and
\begin{align}
\beta^{2} | ( \tilde{B} (u,Ag),g )| 
& \le c\beta^{2} \left \| u \right \| _{L^{6}}\left \|  Ag \right \| _{L^{2}}\left \| \nabla g \right \| _{L^{3}}\nonumber \\
& \le c\beta^{2} \left \| \nabla u \right \| _{L^{2}}\left \|  Ag \right \| ^{\frac{3}{2}} \left \| \nabla g \right \| ^\frac{1}{2}\nonumber \\
& \le\frac{\nu \beta^{2}}{16}\left \| Ag \right \| ^{2}+\frac{432c^{4} \beta^{2}}{\nu ^{3}}   \left \| \nabla u \right \| ^{4} \left \| \nabla g \right \| ^{2}.\label{est4}
\end{align}

Using integration by parts and (\ref{Poincare}), we obtain
\begin{align}
 | \beta^{2}-\alpha^{2}|\,| (\tilde{B}(u,Au),g)  |
& \le  | \beta^{2}-\alpha^{2}  | \left \| u \right \|_{L^{6}} \left \| Au \right \| \left \| \nabla g \right \|_{L^{3}} \nonumber\\
& \le  | \beta^{2}-\alpha^{2}  | c^{2}\left \| \nabla u \right \| \left \| Au \right \| \left \| \nabla g \right \| ^{\frac{1}{2}} \left \| Ag \right \|^{\frac{1}{2}} \nonumber\\
& \le \frac{4c^{4} }{\nu \beta^{2} \lambda_{1}^{\frac{1}{2}}}  | \beta^{2}-\alpha^{2}  | ^{2} \left \| \nabla u \right \|  ^{2} \left \| Au \right \| ^{2}+\frac{\nu \beta^{2} \lambda_{1}^{\frac{1}{2}}}{16 }\left \| \nabla g \right \| \left \| Ag \right \|\nonumber \\
& \le \frac{4c^{4}}{\nu \beta^{2} \lambda_{1}^{\frac{1}{2}}}  | \beta^{2}-\alpha^{2}  | ^{2} \left \| \nabla u \right \|  ^{2} \left \| Au \right \| ^{2} +\frac{\nu \beta^{2}}{16} \left \| Ag \right \| ^{2}.\label{est5}
\end{align}
 Finally, using condition \eqref{Ih}, it follows that 
\begin{align}
-\eta (I_{h}g,g)
& = -\eta (I_{h}g-g,g)-\eta \left \| g \right \| ^{2}\nonumber \\
& \le \eta \left \| I_{h}g-g \right \| ^{2}+\frac{\eta}{4} \left \| g \right \|^{2} -\eta \left \| g \right \| ^{2}\nonumber \\
& \le \eta(c_{1} h ^{2} \left \| \nabla g \right \| ^{2}+c_{2} h ^{4} \left \| Ag \right \| ^{2})-\frac{3}{4}\eta \left \| g \right \| ^{2},\label{est6}
\end{align}
and
\begin{align}
-\eta \beta^{2}(I_{h}g,Ag)
& = -\eta \beta^{2}(I_{h}g-g,Ag)-\eta \beta^{2}(g,Ag)\nonumber \\
& \le  \frac{\eta^{2}\beta^{2}}{\nu } (c_{1} h ^{2} \left \| \nabla g \right \| ^{2}+c_{2} h ^{4} \left \| Ag \right \| ^{2})+\frac{\nu }{4}\beta^{2}\left \| Ag \right \| ^{2}-\eta\beta^{2}\left \| \nabla g \right \| ^{2}.\label{est7}
\end{align}

Therefore, combining estimates \eqref{es1}-\eqref{est7} in \eqref{stp1}, we get
\begin{align}
&\frac{1}{2} \frac{\mathrm{d}}{\mathrm{d} t} (\| g \|^{2}+\beta^{2} \| \nabla g  \| ^{2}) \nonumber\\
&+\left(\frac{15}{16}\nu -\frac{432c^{8}\beta^{2}}{\nu ^{3}} \| \nabla u  \|^{4}-\eta c_{1}h^{2}-\frac{\eta^{2} c_{1} \beta^{2} h^{2}}{\nu}+\eta \beta^{2}\right)\| \nabla g \|^{2}\nonumber \\
& +\left(\frac{\nu}{2}\beta^{2}-\eta c_{2} h^{4}-\frac{\eta^{2}\beta^{2}c_{2}h^{4}}{\nu}\right)  \| Ag \ \| ^{2}+\left(\frac{3\eta}{4}-\frac{432c^{8}}{\nu^{3}} \| \nabla u\| ^{4}\right) \| g  \|^{2}\nonumber \\
& \le \frac{4 | \beta^{2}-\alpha^{2}|^{2}}{\nu \beta^{2}} \left \| u_{t} \right \|^{2}+\frac{4\nu}{\beta^{2}} |\beta^{2}-\alpha^{2}|^{2} \left \| Au \right \|^{2}+\frac{4c^{4}}{\nu \beta^{2} \lambda_{1}^{\frac{1}{2}}}  | \beta^{2}-\alpha^{2}|^{2}\left \| \nabla u \right \| ^{2}\left \| Au \right \|^{2}.\nonumber
\end{align}

By Lemma \ref{lem1}, we have that 
$$\|\nabla u\|^4\leq\ds\frac{M_1^2}{\alpha^4}, $$
Choosing $\eta \gg 1$ and $h\ll 1$ under the hypothesis listed in the statement of the Theorem, in addition to Poincaré's inequality (\ref{Poincare}), we obtain
\begin{align}
& \frac{1}{2} \frac{\mathrm{d}}{\mathrm{d} t} (\left \| g \right \|^{2}+\beta^{2}\left \| \nabla g \right \| ^{2})+\frac{\nu \lambda_{1}}{4}(\left \| g \right \|^{2}+\beta^{2}\left \| \nabla g \right \| ^{2}) \nonumber\\
& \le \frac{4 | \beta^{2}-\alpha^{2}  |^{2} }{\nu \beta^{2}} \left \| u_{t} \right \| ^{2}+\frac{4\nu}{\beta^{2}}  | \beta^{2}-\alpha^{2}  |^{2}\left \| Au \right \| ^{2}+\frac{4c^{4}}{\nu \beta^{2}\lambda_{1}^{\frac{1}{2}}}  | \beta^{2}-\alpha^{2}  |^{2}\left \| \nabla u \right \| ^{2}\left \| Au \right \| ^{2}. \label{usarGr}
\end{align}

Write $$\varphi(t)=4\frac{|\beta^2-\alpha^2|^2}{\beta^{2}} \left(\frac{1}{\nu }\|u_t\|^2+\nu\|Au\|^2+\frac{c^4}{\nu\lambda_{1}^{\frac{1}{2}} }\|\nabla u\|^2 \|Au\|^2\right).$$
From Lemmas \ref{lem1} and \ref{lem7}, we have that \\
\begin{align}
\ds\int_t^{t+T}&\varphi(s)\,ds  = \ds\frac{4|\beta^2-\alpha^2|^2}{\nu\beta^2}\int_t^{t+T}\|u_t(s)\|^2ds+\frac{4\nu|\beta^2-\alpha^2|^2}{\beta^2}\int_t^{t+T}\|Au(s)\|^2ds\,\nonumber\\
& +\ds\frac{4c^4|\beta^2-\alpha^2|^2}{\nu\beta^{2}\lambda_{1}^{\frac{1}{2}}}\ds\int_t^{t+T}\|\nabla u(s)\|^2\|Au(s)\|^2ds\nonumber\\
&\leq  \ds\frac{4|\beta^2-\alpha^2|^2}{\nu\beta^2}\Big[\frac{6}{\alpha^2}\Big(\nu^2+\frac{c^4M_1}{\lambda_1^{1/2}\alpha^2}\Big)\Big(\frac{M_1}{\nu}+\frac{T}{\nu^2\lambda_1}\sup\limits_{s\geq 0}\|f(s)\|^2 \Big)\nonumber\\
& +\frac{3c^4M_1^2T}{\alpha^6\lambda_1^{1/2}}+3T\sup\limits_{s\geq 0}\|f(s)\|^2\Big]+\frac{4\nu|\beta^2-\alpha^2|^2}{\beta^2\alpha^2}\Big(\frac{M_1}{\nu}+\frac{T}{\nu^2\lambda_1}\sup\limits_{s\geq 0}\|f(s)\|^2 \Big)\nonumber\\
& + \ds\frac{4c^4M_1|\beta^2-\alpha^2|^2}{\nu\beta^2\alpha^4 \lambda^{\frac{1}{2}}}\Big(\frac{M_1}{\nu}+\frac{T}{\nu^2\lambda_1}\sup\limits_{s\geq 0}\|f(s)\|^2 \Big).\nonumber
\end{align}
Choosing $T=\left(\nu \lambda_{1}\right)^{-1}>0$, we get
\begin{align}
\ds\int_t^{t+\frac{1}{\lambda_1\nu}}&\varphi(s)\,ds  \leq  \ds\frac{4|\beta^2-\alpha^2|^2}{\nu\beta^2}\Big[\frac{6}{\alpha^2}\Big(\nu^2+\frac{c^4M_1}{\lambda_1^{1/2}\alpha^2}\Big)\Big(\frac{M_1}{\nu}+\frac{1}{\nu^3\lambda_1^2}\sup\limits_{s\geq 0}\|f(s)\|^2 \Big)\nonumber\\
& +\frac{3c^4M_1^2}{\alpha^6\nu\lambda_1^{3/2}}+\frac{3}{\nu\lambda_1}\sup\limits_{s\geq 0}\|f(s)\|^2\Big]+\frac{4\nu|\beta^2-\alpha^2|^2}{\beta^2\alpha^2}\Big(\frac{M_1}{\nu}+\frac{1}{\nu^3\lambda_1^2}\sup\limits_{s\geq 0}\|f(s)\|^2 \Big)\nonumber\\
& + \ds\frac{4c^4M_1|\beta^2-\alpha^2|^2}{\nu\beta^2\alpha^4 \lambda^{\frac{1}{2}}}\Big(\frac{M_1}{\nu}+\frac{1}{\nu^3\lambda_1^2}\sup\limits_{s\geq 0}\|f(s)\|^2 \Big).\label{M222}
\end{align}

Denoting $\zeta(t)=\|g(t)\|^{2}+\beta^{2}\|\nabla g(t)\|^{2}, C=\ds\frac{\nu \lambda_{1}}{2}$, from \eqref{usarGr} and \eqref{M222}, we have
$$
\frac{d}{d t} \zeta(t)+C \zeta(t) \le 2\varphi(t),
$$
with $\ds\int_{t}^{t+\frac{1}{\lambda_{1} \nu}} 2\varphi(s) ds \le M$, where $M$ is defined  as
\begin{align}
M &=  \ds\frac{8|\beta^2-\alpha^2|^2}{\nu\beta^2}\Big[\frac{6}{\alpha^2}\Big(\nu^2+\frac{c^4M_1}{\lambda_1^{1/2}\alpha^2}\Big)\Big(\frac{M_1}{\nu}+\frac{1}{\nu^3\lambda_1^2}\sup\limits_{s\geq 0}\|f(s)\|^2 \Big)\nonumber\\
& +\frac{3c^4M_1^2}{\alpha^6\nu\lambda_1^{3/2}}+\frac{3}{\nu\lambda_1}\sup\limits_{s\geq 0}\|f(s)\|^2\Big]+\frac{8\nu|\beta^2-\alpha^2|^2}{\beta^2\alpha^2}\Big(\frac{M_1}{\nu}+\frac{1}{\nu^3\lambda_1^2}\sup\limits_{s\geq 0}\|f(s)\|^2 \Big)\nonumber\\
& + \ds\frac{8c^4M_1|\beta^2-\alpha^2|^2}{\nu\beta^2\alpha^4 \lambda_{1}^{\frac{1}{2}}}\Big(\frac{M_1}{\nu}+\frac{1}{\nu^3\lambda_1^2}\sup\limits_{s\geq 0}\|f(s)\|^2 \Big).\nonumber
\end{align}
 Therefore, since the conditions of Lemma \ref{Gronwall} are satisfied, we obtain (\ref{teo22}). 
\end{proof}

\section{Numerical Simulation}
In this section, we conduct numerical experiments to illustrate and verify the theoretical results on the convergence as stated in Theorem \ref{teoBardina} and Theorem \ref{teoNSalpha}. In order to complete this task, we numerically solve the simplified Bardina model and the Navier-Stokes-$\alpha$ model in a three-dimensional domain. Initially, we implemented a finite difference algorithm, which encountered stability issues. We then attempted to use finite element methods, but these also proved challenging, particularly in getting the nudging term to work effectively. After many tests, we adopted a newly developed Python package ``Dedalus", which supports symbolic entry for equations and conditions. Dedalus utilizes spectral methods for solving partial differential equations and is particularly convenient for problems with periodic boundary conditions. Using Dedalus, we performed the following two sets of numerical simulations for both the simplified Bardina model and the Navier-Stokes-$\alpha$ model: one is with initial conditions without a random component to assess the impact of $\eta$ and one is with initial conditions with a random component to assess the impact of $\eta$. For each scenario, we have provided two types of graphs: one presents the normalized difference between the solutions from the original system and the data assimilation system. In these graphs, a decreasing trend indicates convergence, while an increasing trend represents divergence. The other graph displays the velocity contours for both the original system and the data assimilation system at the initial and final time steps. In cases of convergence, even if the two systems start differently, their velocity contours become similar by the end. Conversely, in divergent cases, the velocity contours remain distinct. The following provide the details.

\subsection{Testing the impact of $\eta$-without random initial conditions}
\subsubsection{Numerical Simulation for the simplified Bardina model}

Choosing a domain $\Omega=[0, 1]^3$, we have the initial conditions for the original system (\ref{Bardina1}) to be $u=(u_0,\ v_0,\ w_0):$
$$u_0=0.1 * \sin x,$$
$$v_0=-0.05 * \sin y,$$
$$w_0=-0.05 * \sin z.$$

The initial conditions for the assimilated model (\ref{Bardina1assimilated}) is taken to be $w=(\hat{u}_0,\ \hat{v}_0,\ \hat{w}_0):$
$$\hat{u}_0=-0.025 * \sin (4x),$$
$$\hat{v}_0=0.025 * \sin (4y),$$
$$\hat{w}_0=0.025 * \sin (4z).$$
In practice, this initial condition is arbitrary and can be set to anything within the domain.

We carefully select the parameters so they satisfy hypotheses 1 and 2 given in Theorem \ref{teoBardina}, i.e. 
\begin{enumerate}
    \item $\eta> \ds\frac{27c^4}{16}\left(\frac{3}{\nu}\right)^3\ds\frac{M_1^2}{\alpha^4}\ \Rightarrow \ \eta>C_1:=\frac{27c^4}{16}\left(\frac{3}{\nu}\right)^3\ds\frac{M_1^2}{\alpha^4};$\label{cond1a}
        \item $2\eta c_1 h^2<\nu\ \Rightarrow \nu>C_2:= 2\eta c_1 h^2;$\\
        \item $2\eta c_2h^4<\nu\beta^2\ \Rightarrow \beta>C_3:=\sqrt{\frac{2\eta c_2h^4}{\nu}}$.
         \label{cond2a}
           \end{enumerate}
Here, the constants are $c={\frac{4}{3\sqrt{3}}}^{3/4}$ \cite{Galdi}, $c_1=\sqrt{32}$ and $c_2=2$ \cite{albanez2016continuous}. We first fix $\nu=0.45$ and $\alpha=0.25$. $M_1$ depends on the initial conditions and force. Here, we take the force to be 0 and $M_1=0.003399$, so we have $C_1=0.91399$. We then compare two scenarios: one with $\eta=4>C_1$
  and the other with $\eta=0.0001<C_1$. Once $\eta$ is chosen, we can choose the $h=\frac{1}{39}$ value to make $\nu=0.45>C_2$. Lastly, we choose $\beta=0.3$ so $\beta>C_3$. 
  The graphical results on the difference are presented in Figures 1 (high $\eta$) and 4 (low $\eta$). In all these difference figures, the x-axis represents time, while the y-axis shows the logarithm of the normalized difference between the solution from the original system and the data assimilation system. The graphical results on the velocity contour are presented in Figures 2, 3 (high $\eta$)
  and 5, 6 (low $\eta$). In all these contour graphs, we compare the original and the assimilated systems at the beginning and at the end time step.


\begin{figure}
\centering
    \includegraphics[totalheight=7cm]{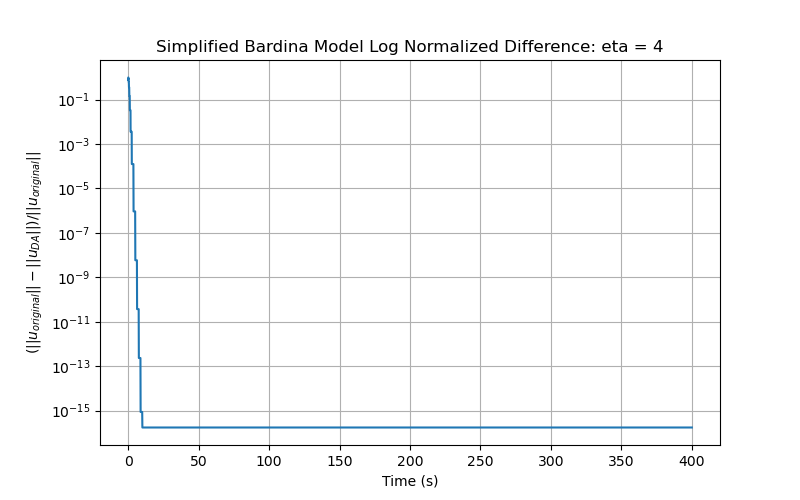}
    \caption{Error plot of Bardina model with high $\eta$ value-without random initial conditions case.}
    \label{fig1}
\end{figure}
\begin{figure}
\centering
    \includegraphics[totalheight=7cm]{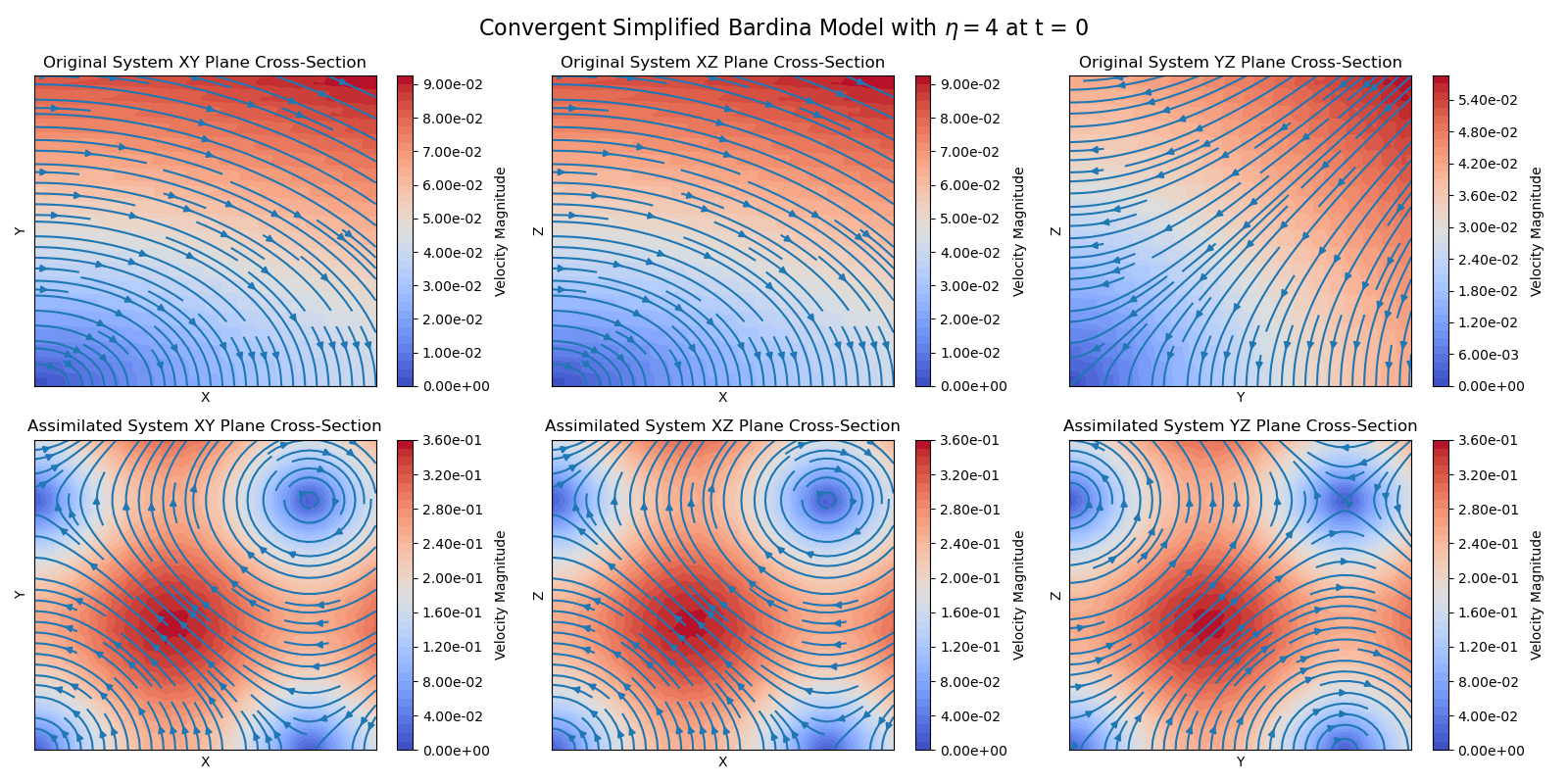}
    \caption{Velocity contour of Bardina model with high $\eta$ value-without random initial conditions case at $t=0$.}
    \label{fig2}
\end{figure}
\begin{figure}
\centering
    \includegraphics[totalheight=7cm]{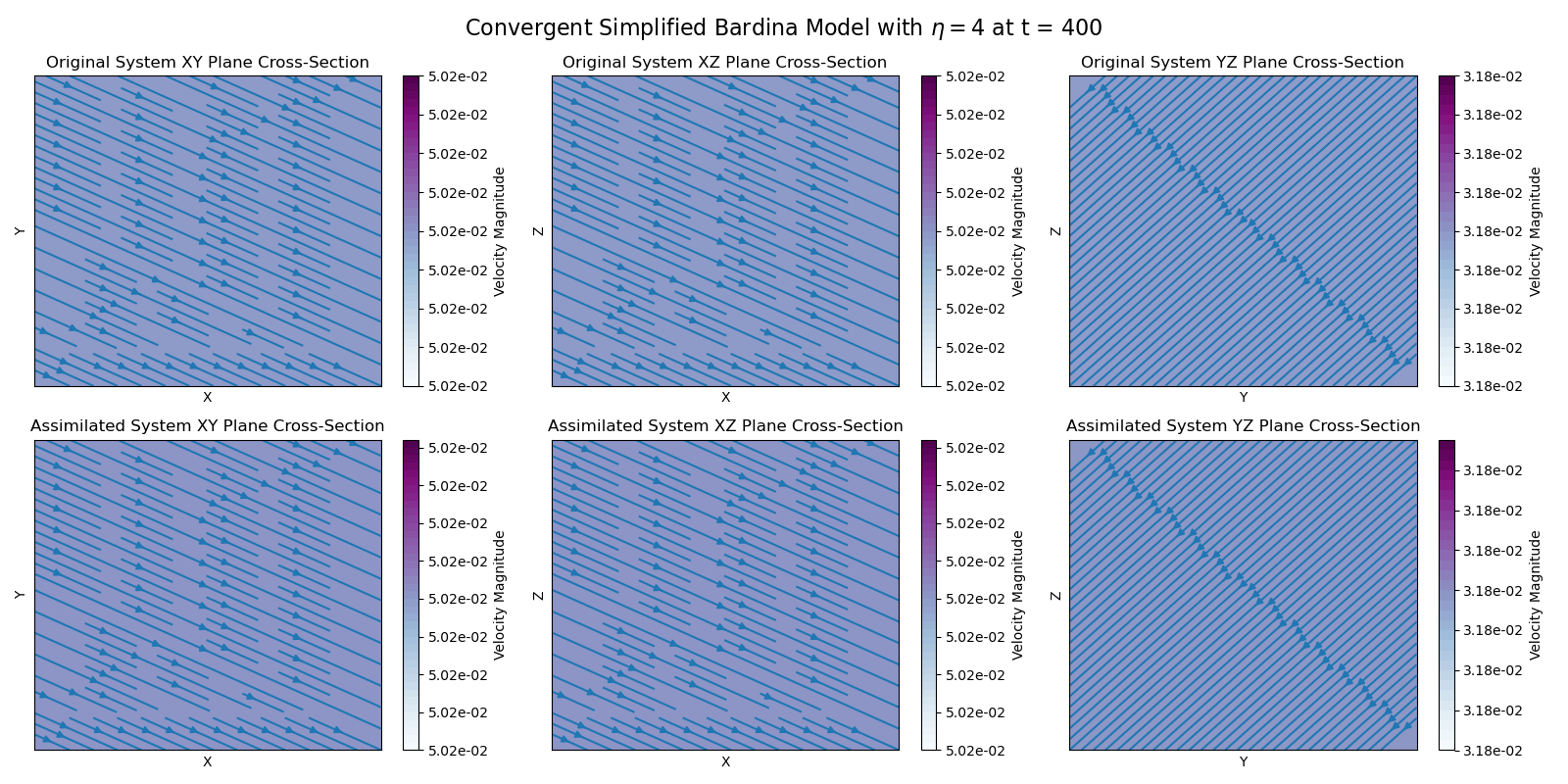}
    \caption{Velocity contour of Bardina model with high $\eta$ value-without random initial conditions case at $t=400$.}
    \label{fig3}
\end{figure}
\begin{figure}
\centering
    \includegraphics[totalheight=7cm]{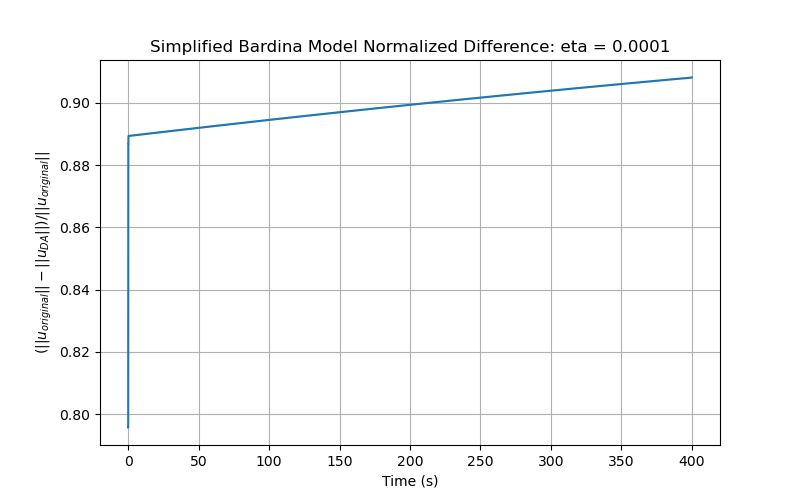}
    \caption{Error plot of Bardina model with high $\eta$ value-without random initial conditions case.}
    \label{fig4}
\end{figure}
\begin{figure}
\centering
    \includegraphics[totalheight=7cm]{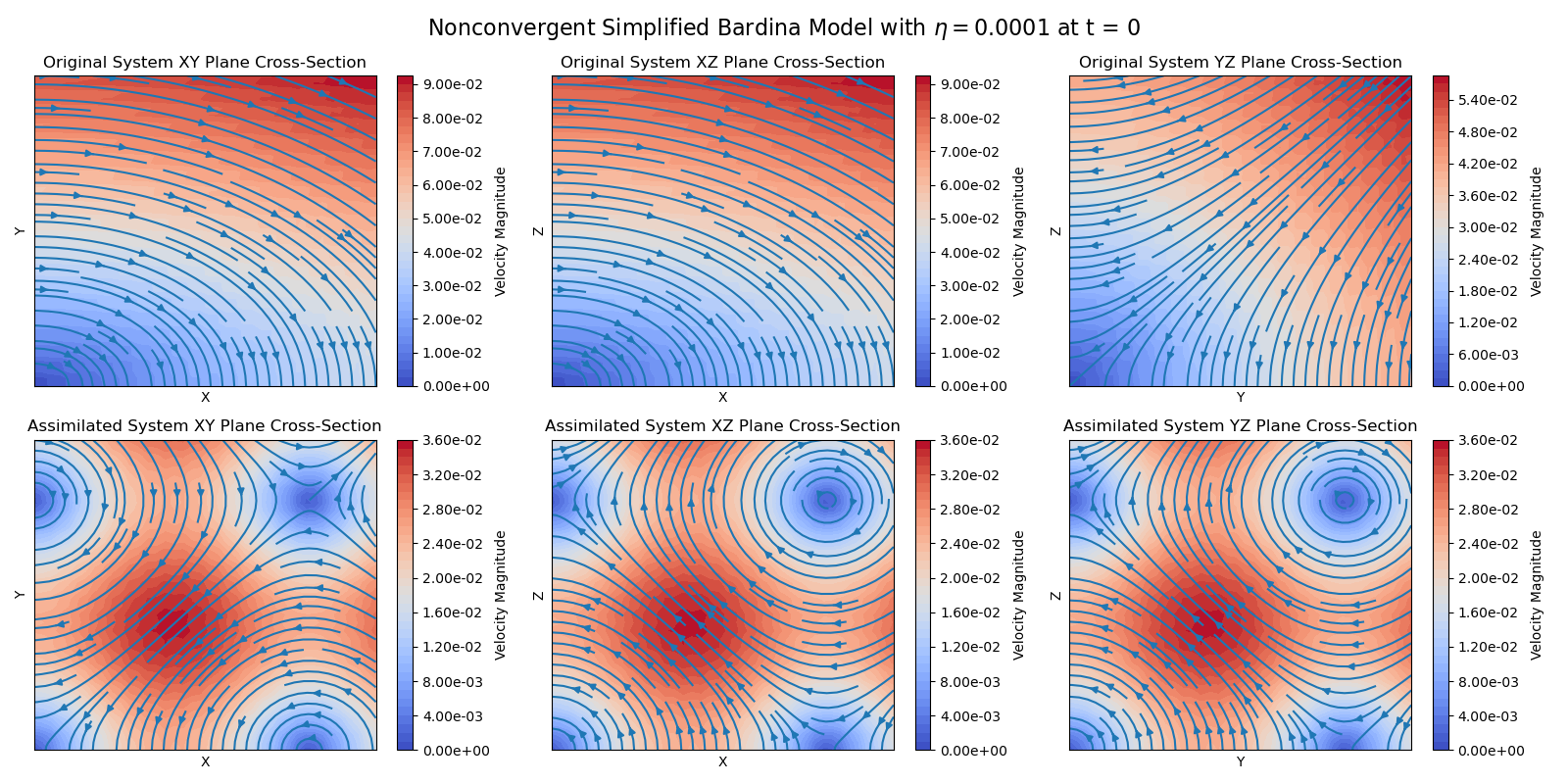}
    \caption{Velocity contour of Bardina model with high $\eta$ value-without random initial conditions case at $t=0$.}
    \label{fig5}
\end{figure}
\begin{figure}
\centering
    \includegraphics[totalheight=7cm]{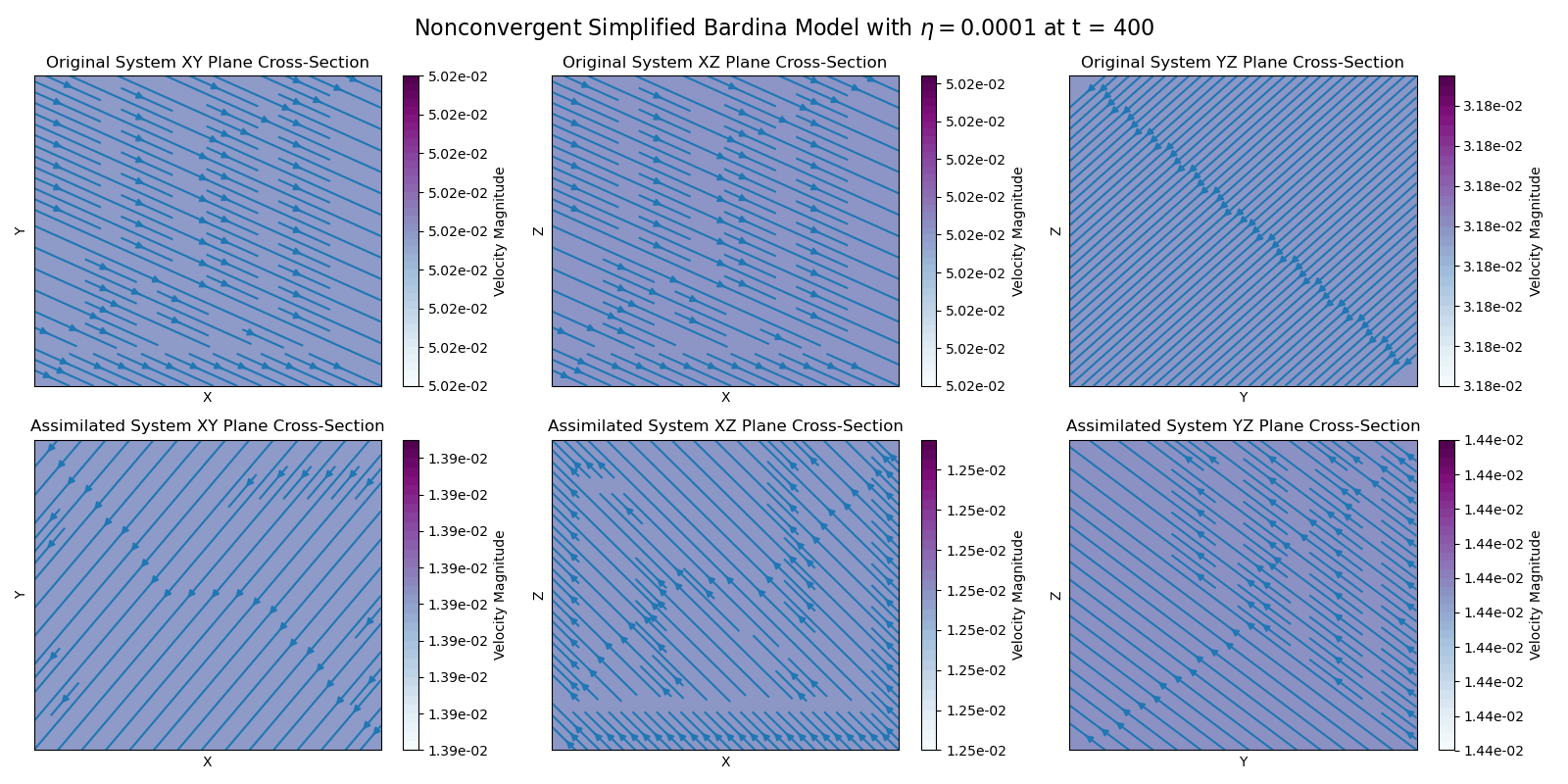}
    \caption{Velocity contour of Bardina model with high $\eta$ value-without random initial conditions case at $t=400$.}
    \label{fig6}
\end{figure}
\subsubsection{Numerical Simulation for the Navier-Stokes-$\alpha$ model}
When comparing the simplified Bardina (\ref{Bardina1}) and the Navier-Stokes-$\alpha$ (\ref{NSalpha1}) model, the key difference lies in their nonlinear terms. In the application of the data assimilation algorithm, the nudging is applied to the unfiltered velocity for the Navier-Stokes-$\alpha$ model, whereas in the Bardina model,  it is applied to the filtered velocity.

For the numerical computation, we have the domain to be $[0, 1]^3$ and we use the same initial conditions as the simplified Bardina model.

We also need to choose $\eta$ large enough and $h$ small enough so they satisfy hypotheses 1, 2 and 3 given in Theorem \ref{teoNSalpha}, i.e. 
\begin{enumerate}
\item $\eta\ge \frac{576c^{8}M_{1}^{2}}{\alpha^{4} \nu^{3}}\ \Rightarrow \eta \geq C_1:=\frac{576c^{8}M_{1}^{2}}{\alpha^{4} \nu^{3}}$, \label{hyp1}
\item $\eta c_{1}h^{2}+\ds\frac{\eta^{2}c_{1}\beta^{2}h^{2}}{\nu}+\frac{432c^{8}\beta^{2}M^{2}_{1}}{\nu^{3}\alpha^{4}}-\eta\beta^{2}<\frac{11\nu}{16}$, \\
\item $\eta c_{2}h^{4}+\ds\frac{\eta^{2}\beta^{2}c_{2}h^{4}}{\nu}< \frac{\nu}{4}\beta^{2}.$ 
 \end{enumerate}
Here, the constants are $c={\frac{4}{3\sqrt{3}}}^{3/4}$ \cite{Galdi}, $c_1=\sqrt{32}$ and $c_2=2$ \cite{albanez2016continuous}. We first fix the $\nu=0.45$ and $\alpha=0.25$. $M_1$ depends on the initial conditions and force. Here we take the force to be 0 and $M_1=0.00264$. We have $C_1=2.3546$. We compare two cases: $\eta=4>C_1$ (see Figures 7-9) and $\eta=0.0001<C_1$ (see Figures 10-12). Once $\eta$ is chosen, we can choose the $h=\frac{1}{39}$ and $\beta=0.3$ so the second and third conditions above are satisfied. 
\begin{figure}
\centering
    \includegraphics[totalheight=7cm]{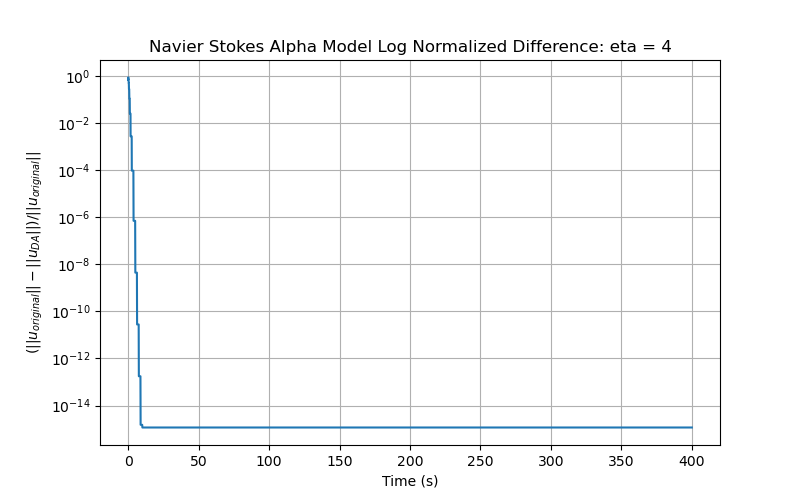}
    \caption{Error plot of NS-$\alpha$ model with high $\eta$ value-without random initial conditions case.}
    \label{fig7}
\end{figure}
\begin{figure}
\centering
    \includegraphics[totalheight=7cm]{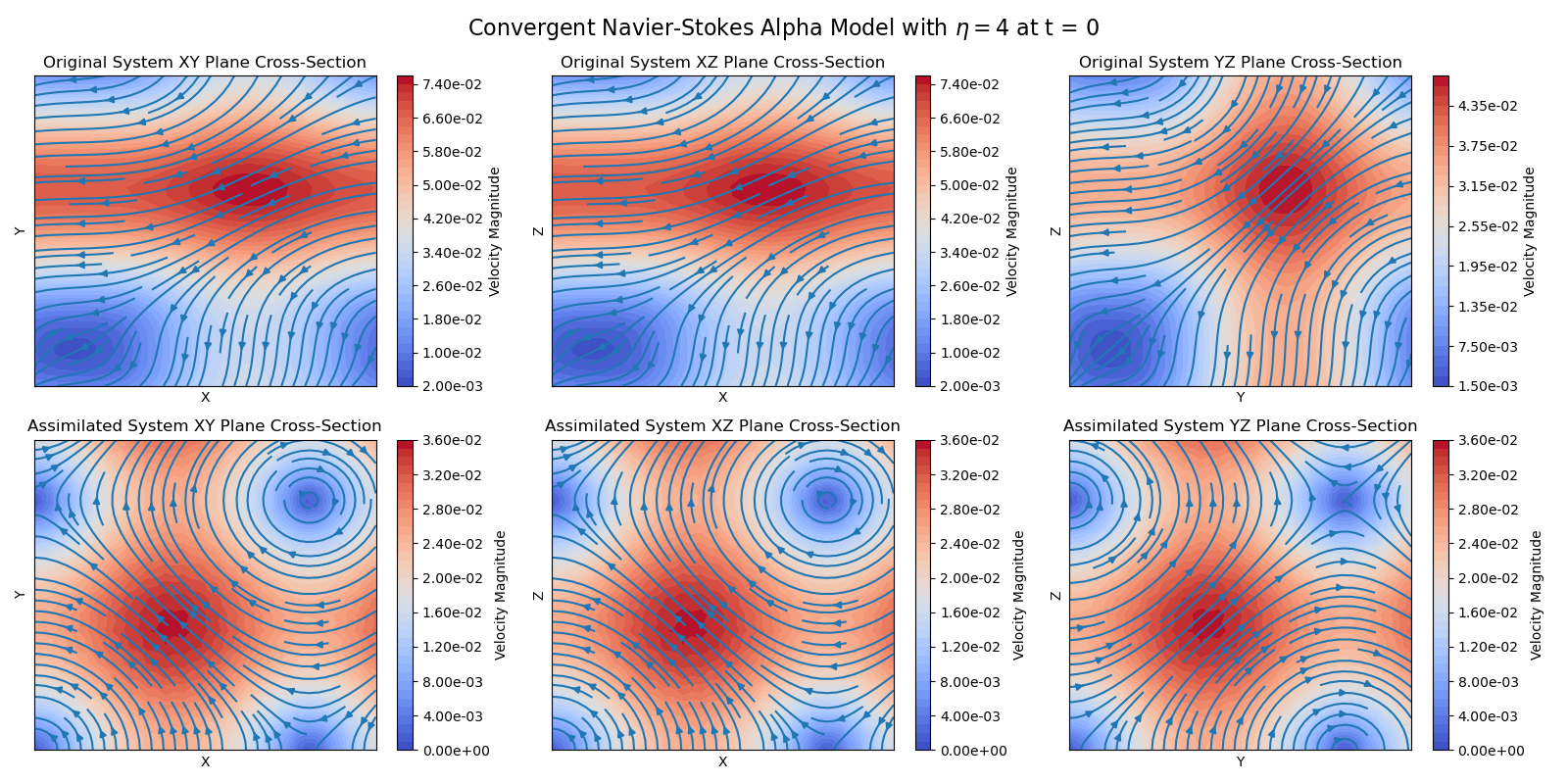}
    \caption{Velocity contour of NS-$\alpha$ with high $\eta$ value-without random initial conditions case at $t=0$.}
    \label{fig8}
\end{figure}
\begin{figure}
\centering
    \includegraphics[totalheight=7cm]{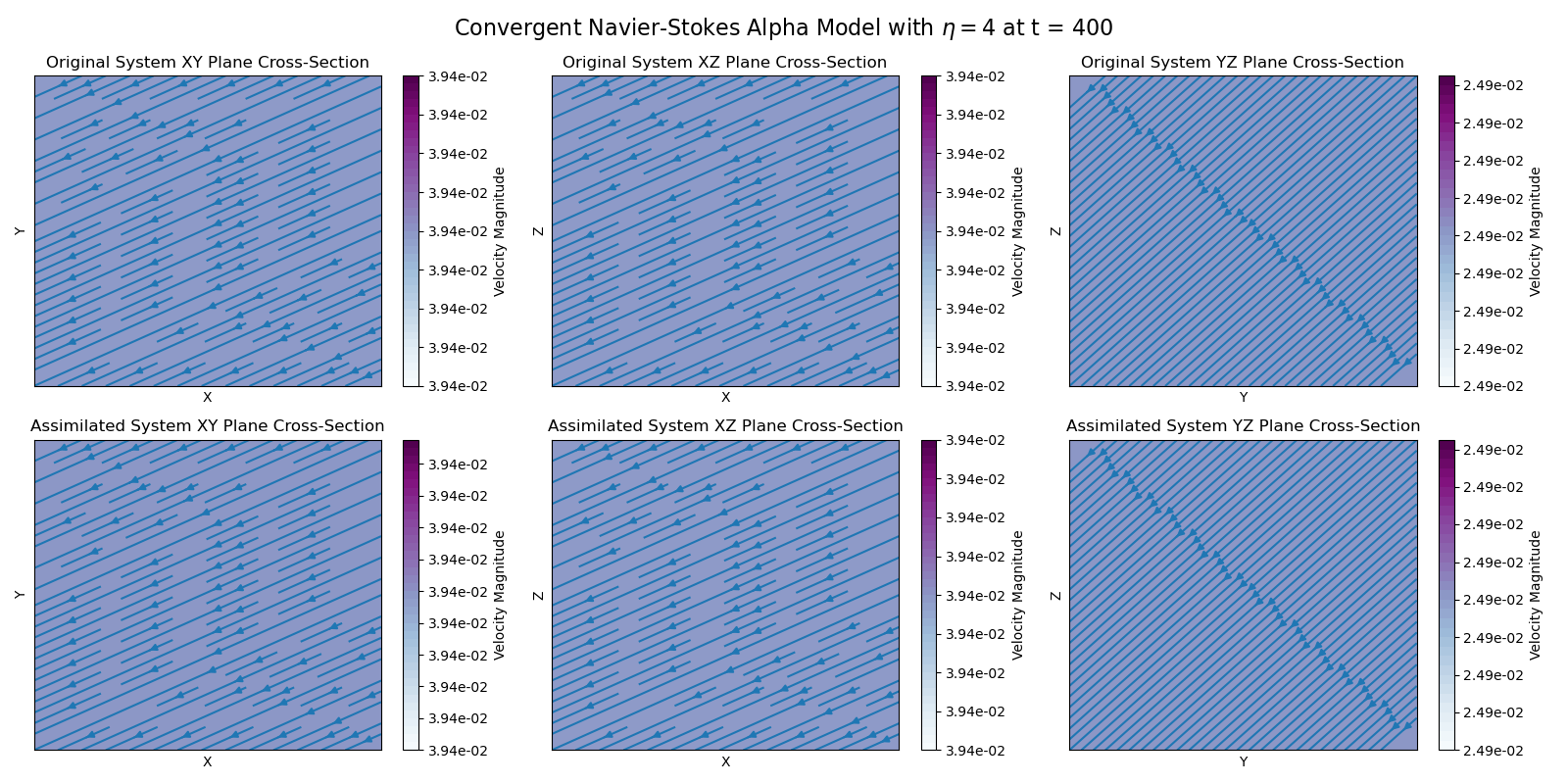}
    \caption{Velocity contour of NS-$\alpha$ model with high $\eta$ value-without random initial conditions case at $t=400$.}
    \label{fig9}
\end{figure}
\begin{figure}
\centering
    \includegraphics[totalheight=7cm]{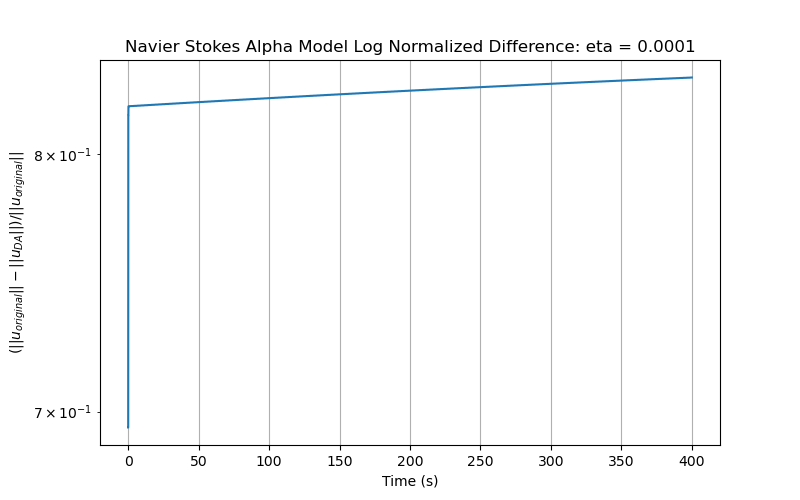}
    \caption{Error plot of NS-$\alpha$ model with high $\eta$ value-without random initial conditions case.}
    \label{fig10}
\end{figure}
\begin{figure}
\centering
    \includegraphics[totalheight=7cm]{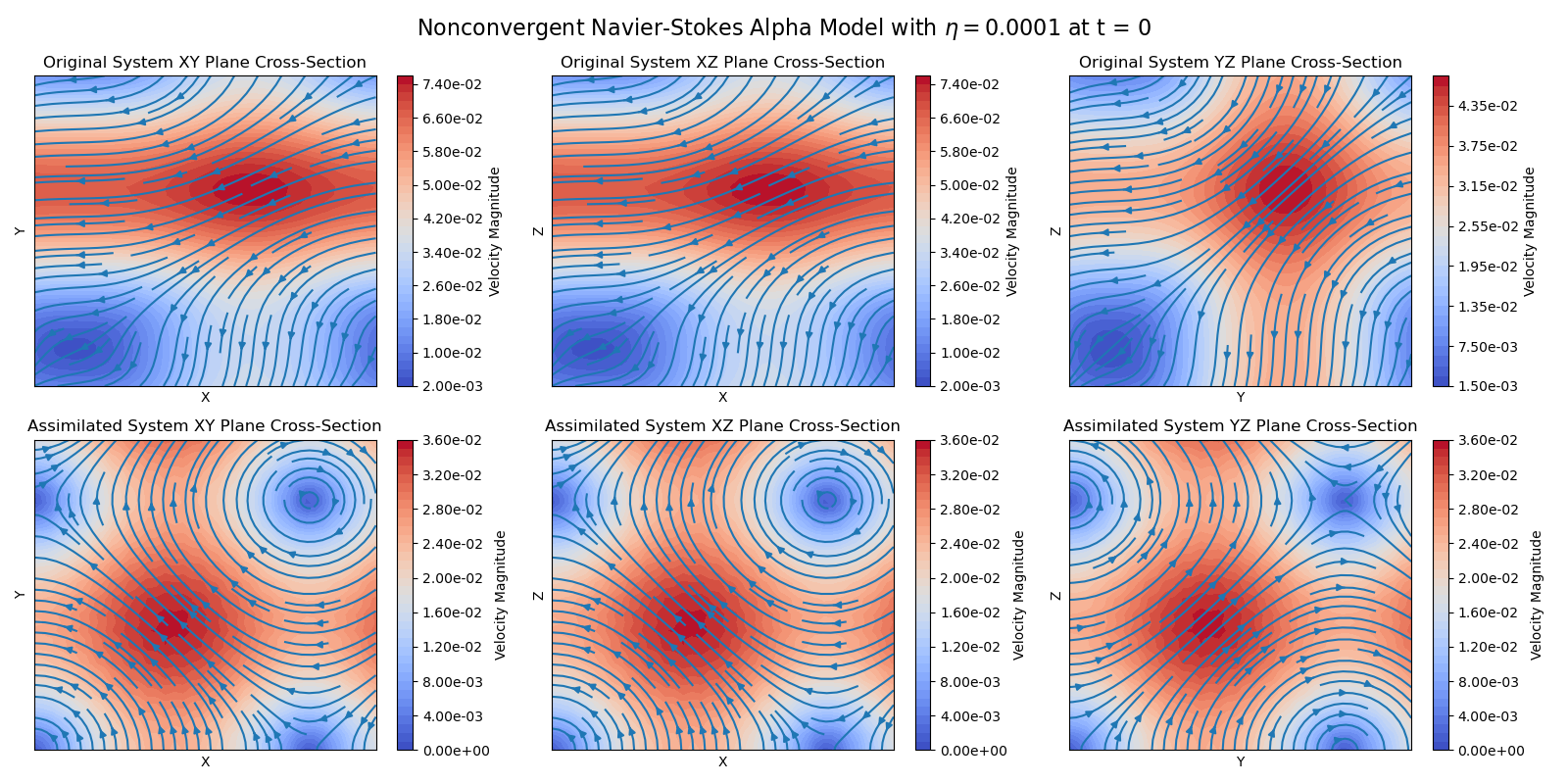}
    \caption{Velocity contour of NS-$\alpha$ model with high $\eta$ value-without random initial conditions case at $t=0$.}
    \label{fig11}
\end{figure}
\begin{figure}
\centering
    \includegraphics[totalheight=7cm]{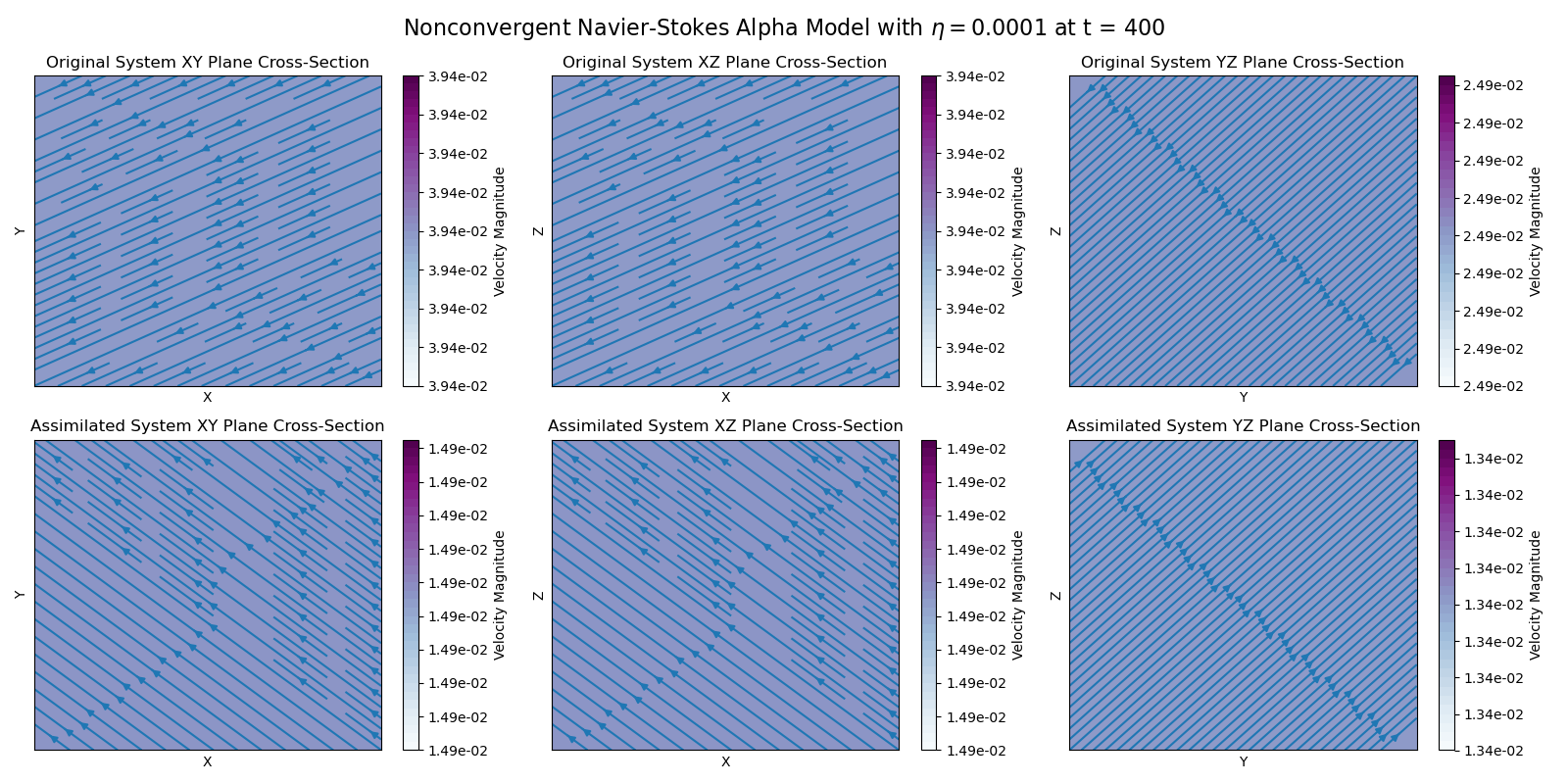}
    \caption{Velocity contour of NS-$\alpha$ model with high $\eta$ value-without random initial conditions case at $t=400$.}
    \label{fig12}
\end{figure}
\subsection{Testing the impact of $\eta$-with random initial conditions}
\subsubsection{Numerical Simulation for the simplified Bardina model}
The domain is $\Omega=[0, 1]^3$ and the initial conditions for the original system (\ref{Bardina1}) has a random component
 which is $u=(u_0,\ v_0,\ w_0):$
$$u_0=0.1*\sin x+0.01*X -0.005,$$
$$v_0=-0.05*\sin y+0.01*X -0.005,$$
$$w_0=-0.05*\sin z+0.01*X -0.005,$$
where $X$ is a random variable drawn from a uniform distribution.

The initial conditions for the assimilated model (\ref{Bardina1assimilated}) is taken to be $w=(\hat{u}_0,\ \hat{v}_0,\ \hat{w}_0):$
$$\hat{u}_0=-0.025 * \sin (4x),$$
$$\hat{v}_0=0.025 * \sin (4y),$$
$$\hat{w}_0=0.025 * \sin (4z).$$
Here, we have $\nu=0.45$ and $\alpha=0.25$. $M_1=0.00336$, $h=\frac{1}{39}$, and $\beta=0.3$. We compare the results when $\eta=4>C_1\approx 0.6623$ (results in 13-15) and $\eta=0.0001<C_1\approx 0.6623$ (results in 16-18). 
Note that, due to the random component in our initial conditions, each run yields a different $C_1$ value; however, these values are very close to one another.
\begin{figure}
\centering
    \includegraphics[totalheight=7cm]{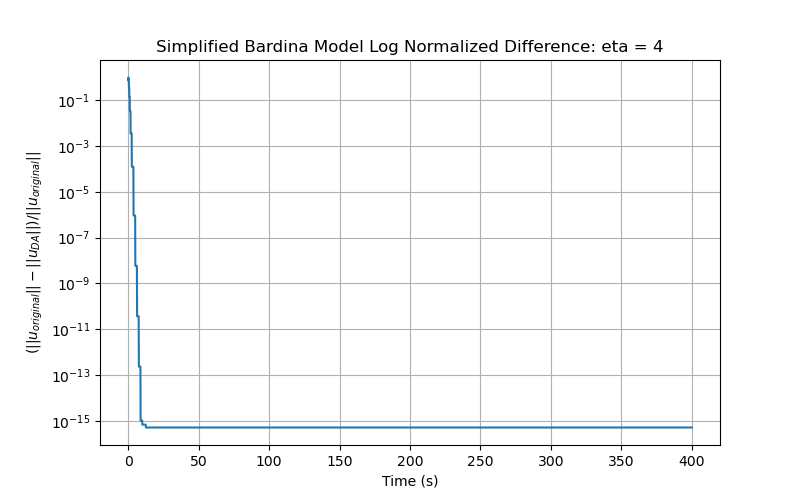}
    \caption{Error plot of Bardina model with high $\eta$ value-with random initial conditions case.}
    \label{fig13}
\end{figure}
\begin{figure}
\centering
    \includegraphics[totalheight=7cm]{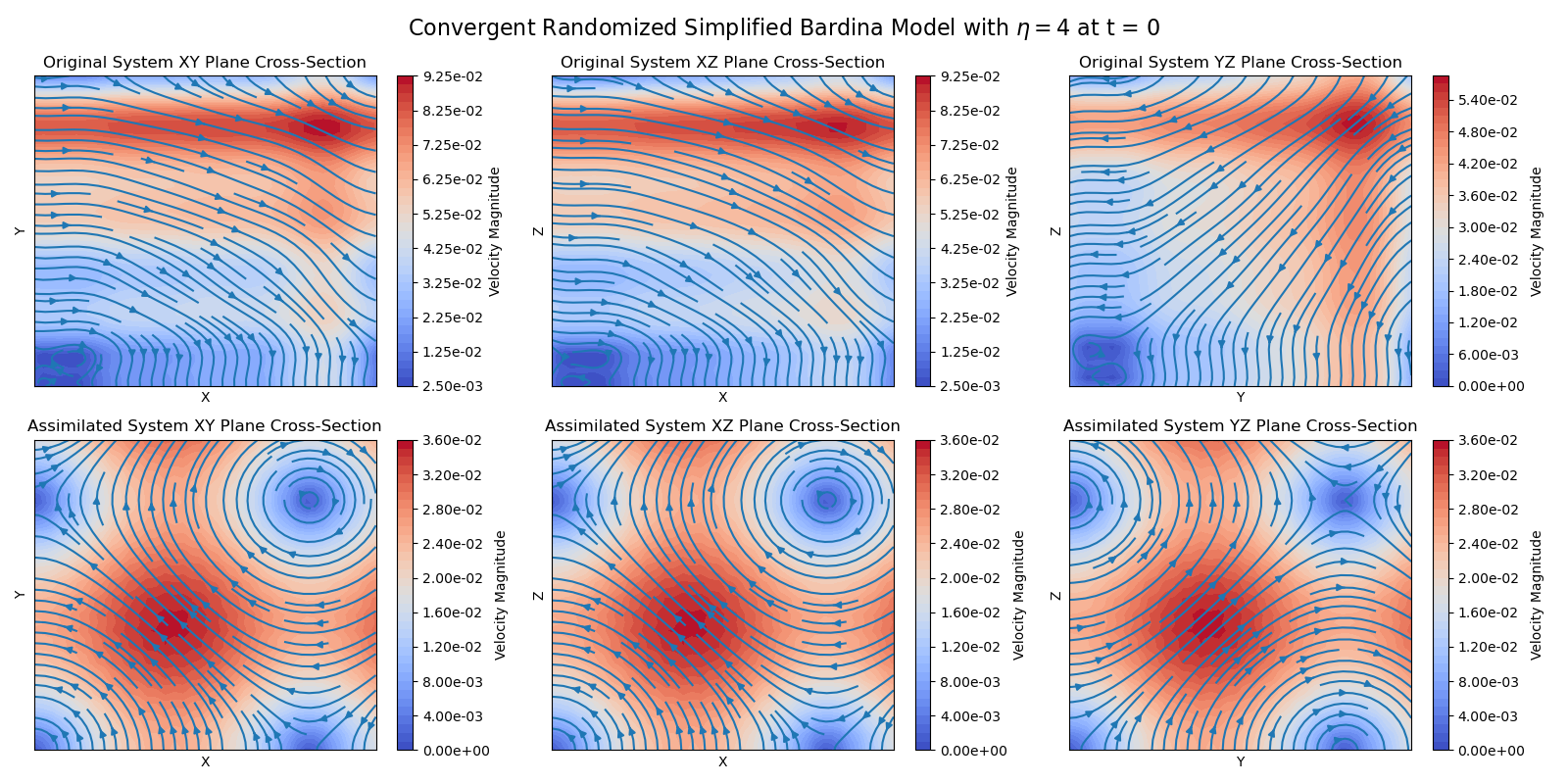}
    \caption{Velocity contour of Bardina model with high $\eta$ value-with random initial conditions case at $t=0$.}
    \label{fig14}
\end{figure}
\begin{figure}
\centering
    \includegraphics[totalheight=7cm]{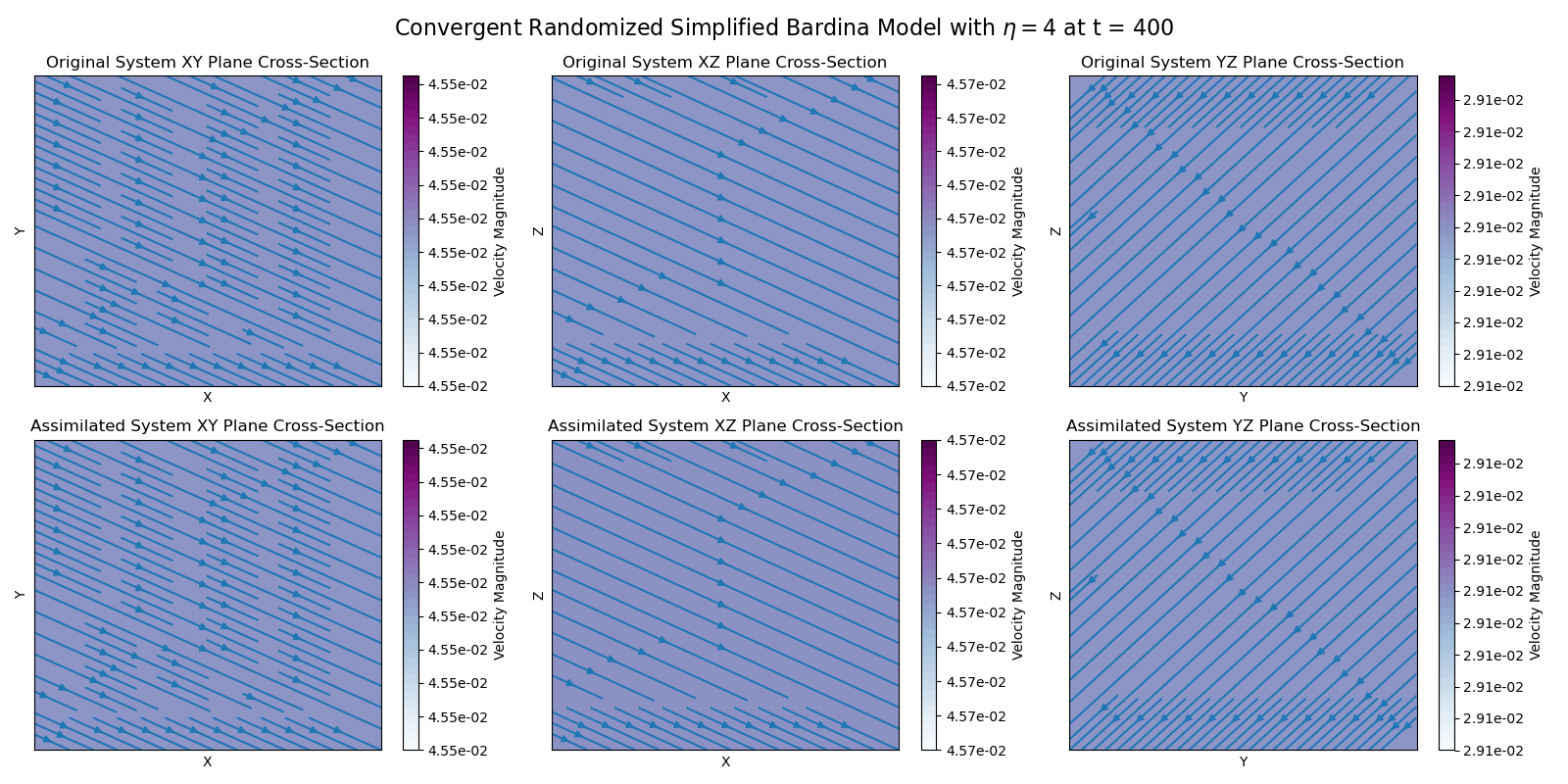}
    \caption{Velocity contour of Bardina model with high $\eta$ value-with random initial conditions case at $t=400$.}
    \label{fig15}
\end{figure}
\begin{figure}
\centering
    \includegraphics[totalheight=7cm]{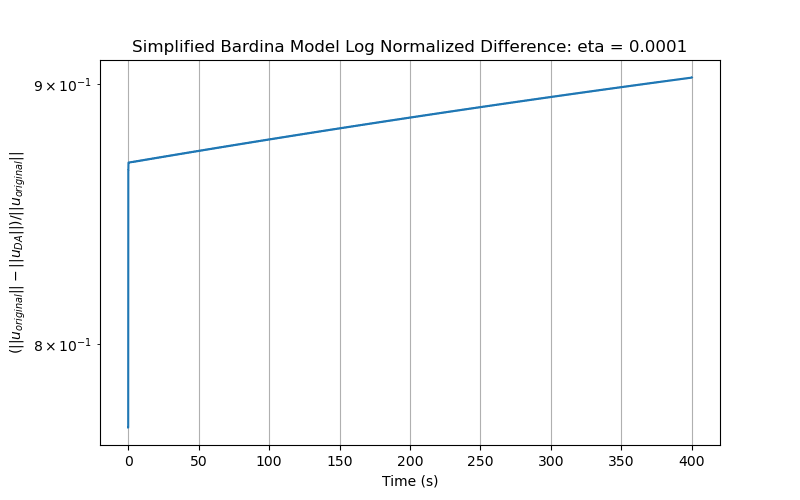}
    \caption{Error plot of Bardina model with high $\eta$ value-with random initial conditions case.}
    \label{fig16}
\end{figure}
\begin{figure}
\centering
    \includegraphics[totalheight=7cm]{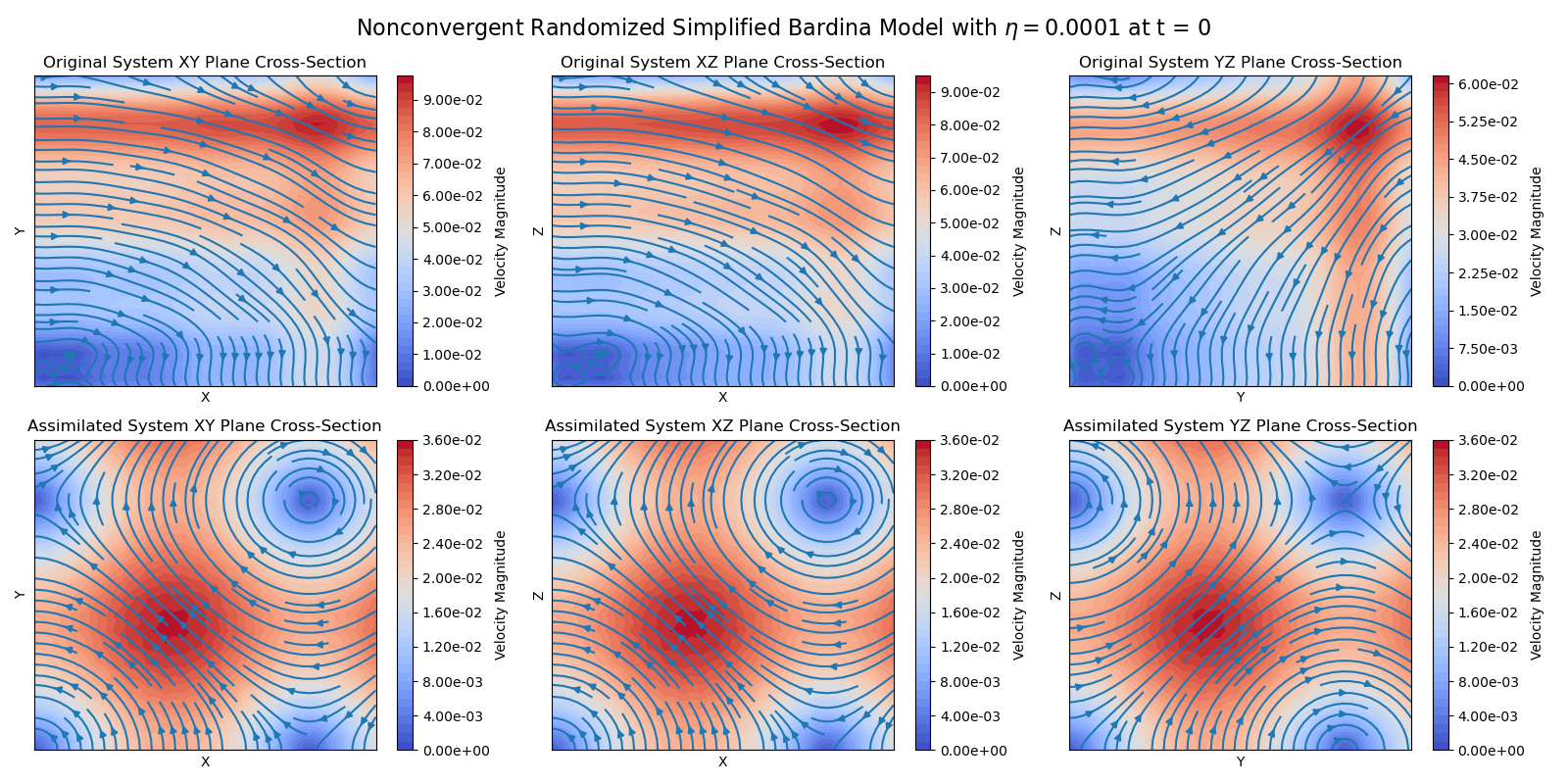}
    \caption{Velocity contour of Bardina model with high $\eta$ value-with random initial conditions case at $t=0$.}
    \label{fig17}
\end{figure}
\begin{figure}
\centering
    \includegraphics[totalheight=7cm]{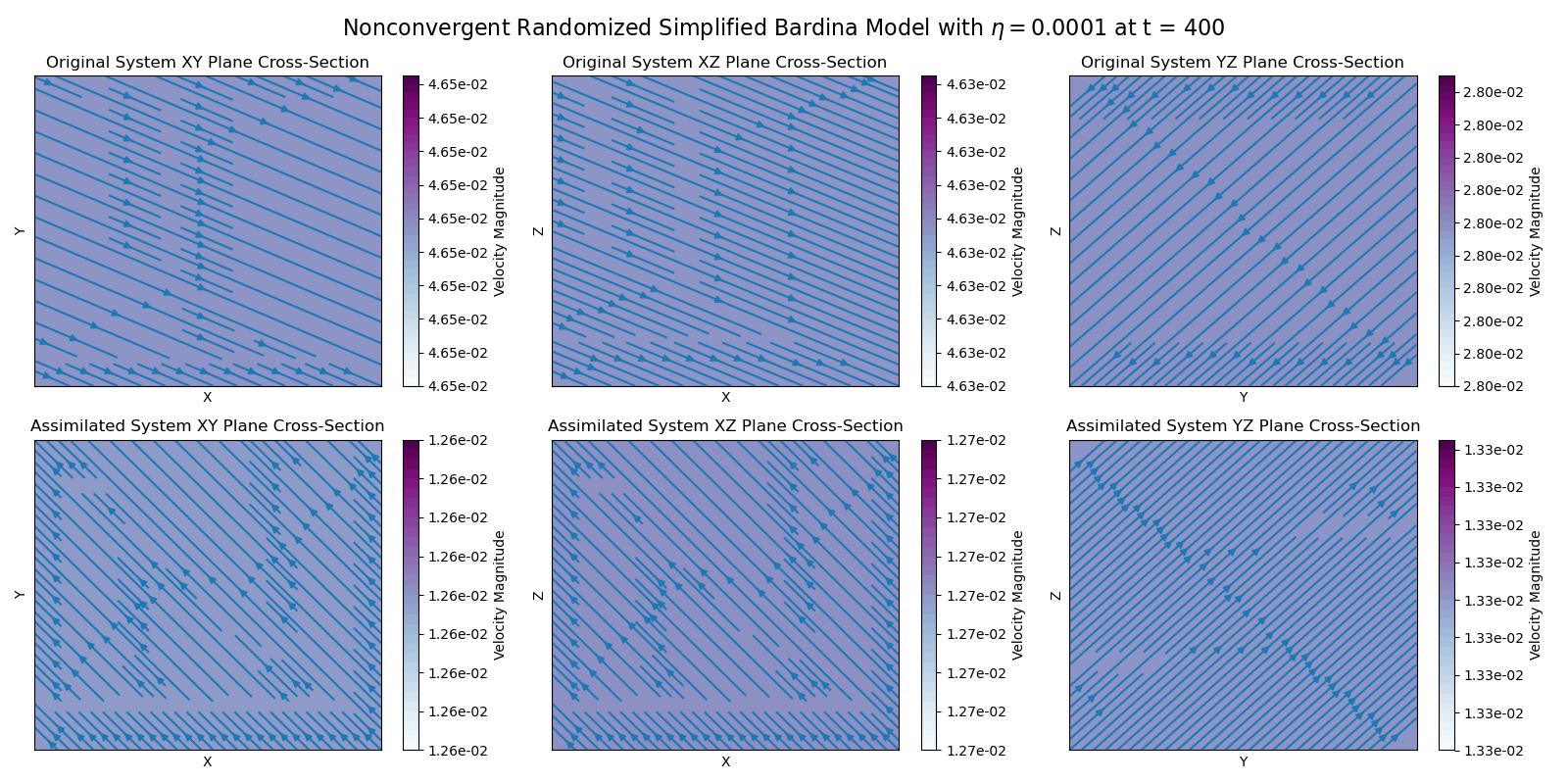}
    \caption{Velocity contour of Bardina model with high $\eta$ value-with random initial conditions case at $t=400$.}
    \label{fig18}
\end{figure}
\subsubsection{Numerical Simulation for the Navier-Stokes-$\alpha$ model}
The domain is $\Omega=[0, 1]^3$ and the initial conditions for the original system (\ref{NSalpha1}) has a random component
 which is $u=(u_0,\ v_0,\ w_0):$
$$u_0=-0.1*\sin x+0.01*X -0.005,$$
$$v_0=-0.05 * \sin y+0.01*X-0.005,$$
$$w_0=-0.05 * \sin z+0.01*X-0.005,$$
where $X$ is a random variable drawn from a uniform distribution.

The initial conditions for the assimilated model (\ref{NSalpha1assimilated}) is taken to be $w=(\hat{u}_0,\ \hat{v}_0,\ \hat{w}_0):$
$$\hat{u}_0=-0.025 * \sin (4x),$$
$$\hat{v}_0=0.025 * \sin (4y),$$
$$\hat{w}_0=0.025 * \sin (4z).$$
Here, we have $\nu=0.45$ and $\alpha=0.25$. $M_1=0.003384$, $h=\frac{1}{39}$, and $\beta=0.3$. We compare the results when $\eta=4>C_1\approx 3.85723$ (results in Figure 19-21) and $\eta=0.0001<C_1\approx 3.85723$ (results in Figure 22-24). 
\begin{figure}
\centering
    \includegraphics[totalheight=7cm]{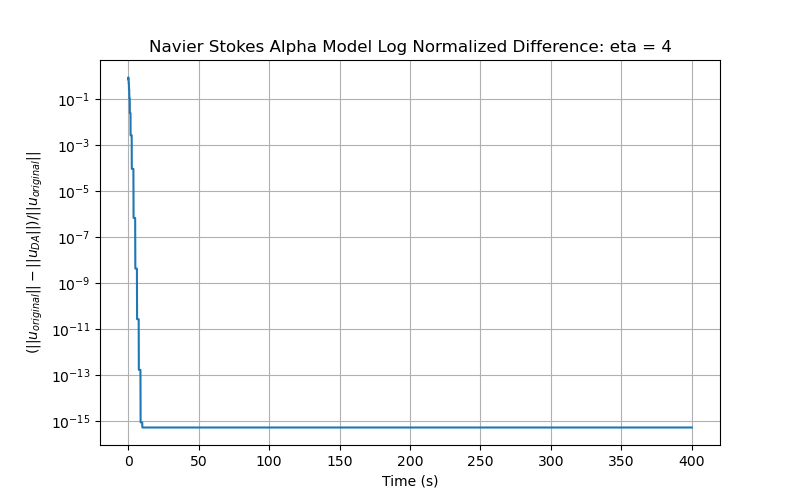}
    \caption{Error plot of NS-$\alpha$ model with high $\eta$ value-with random initial conditions case.}
    \label{fig19}
\end{figure}
\begin{figure}
\centering
    \includegraphics[totalheight=7cm]{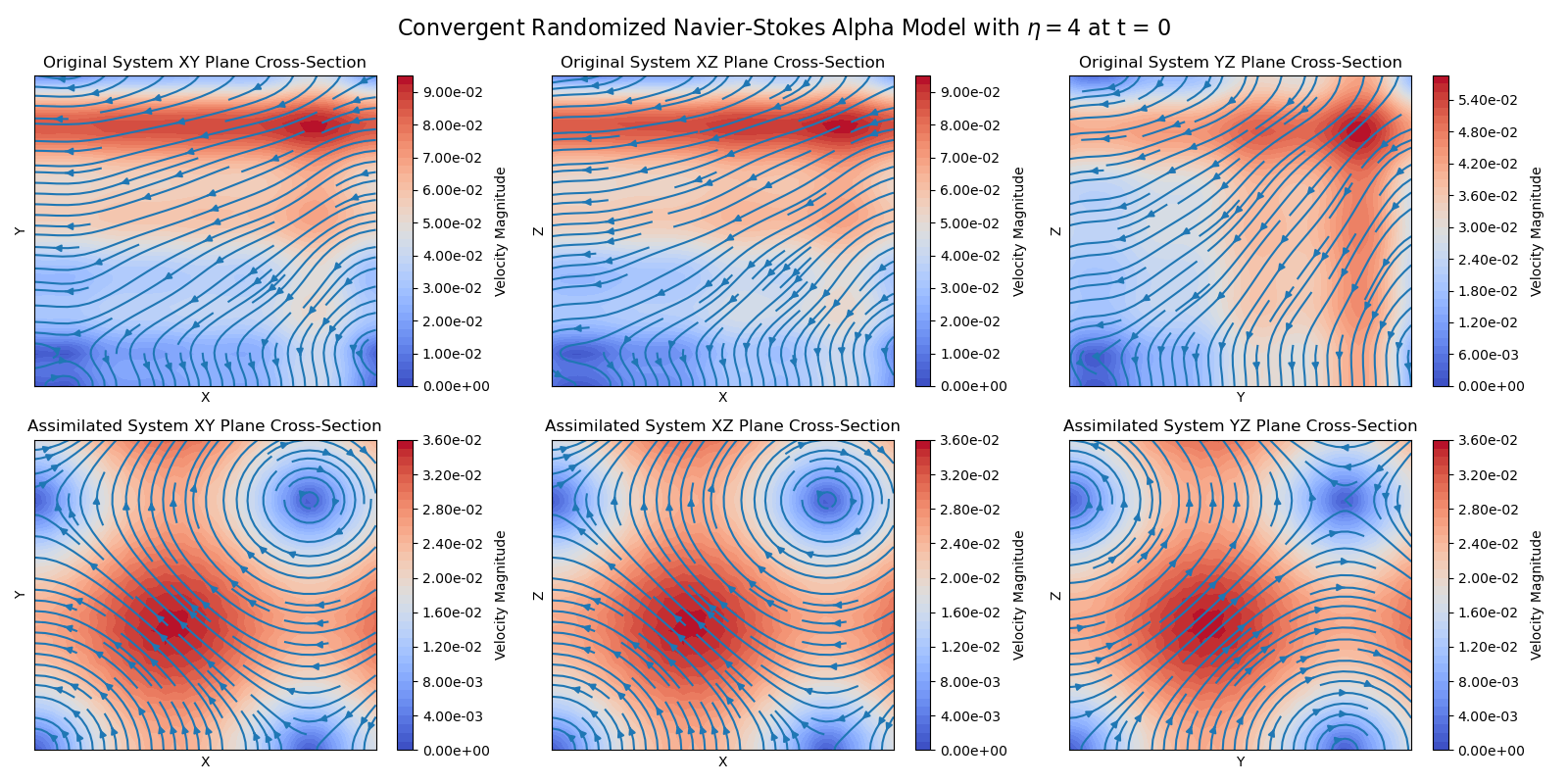}
    \caption{Velocity contour of NS-$\alpha$ with high $\eta$ value-with random initial conditions case at $t=0$.}
    \label{fig20}
\end{figure}
\begin{figure}
\centering
    \includegraphics[totalheight=7cm]{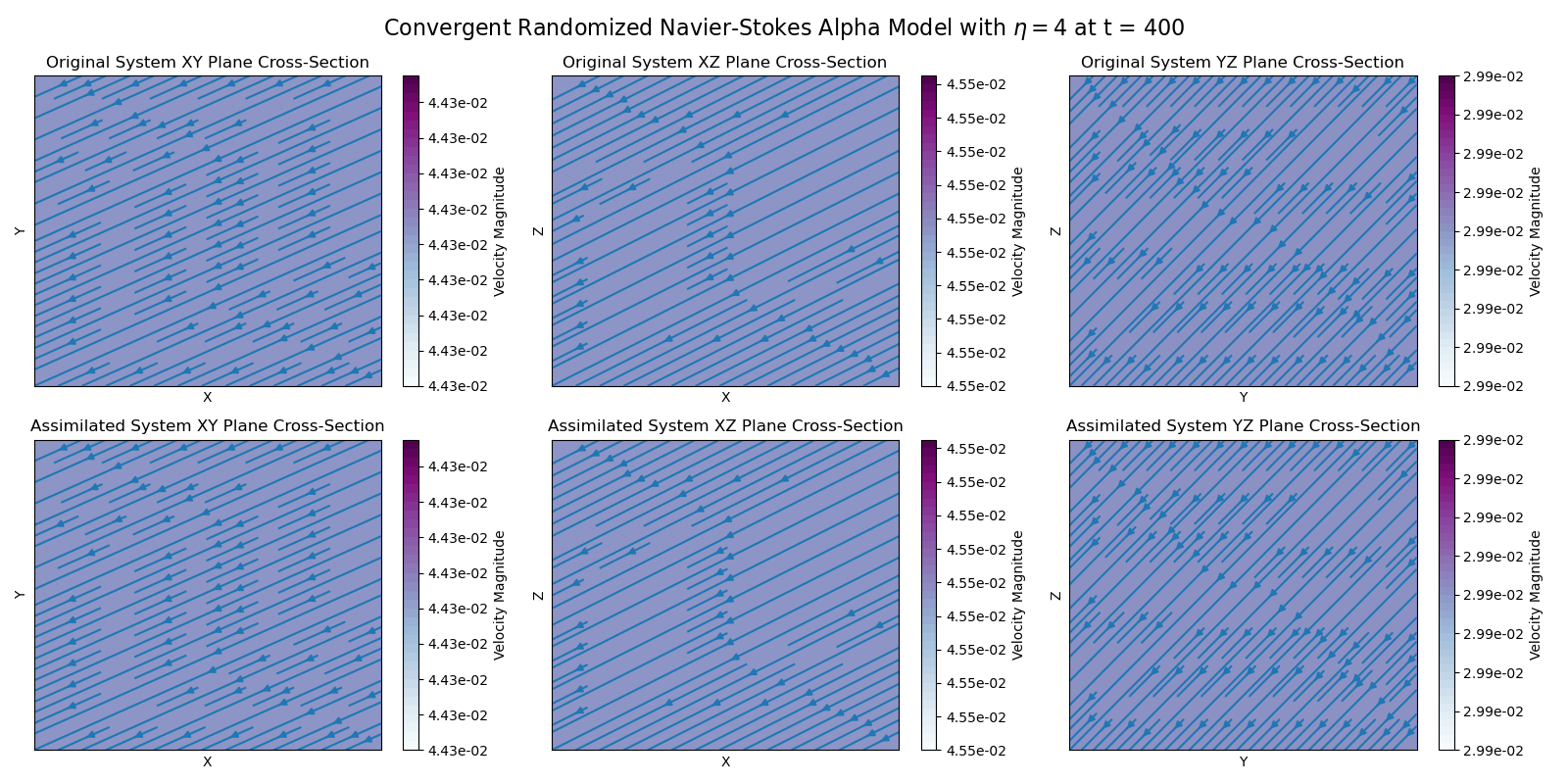}
    \caption{Velocity contour of NS-$\alpha$ model with high $\eta$ value-with random initial conditions case at $t=400$.}
    \label{fig21}
\end{figure}
\begin{figure}
\centering
    \includegraphics[totalheight=7cm]{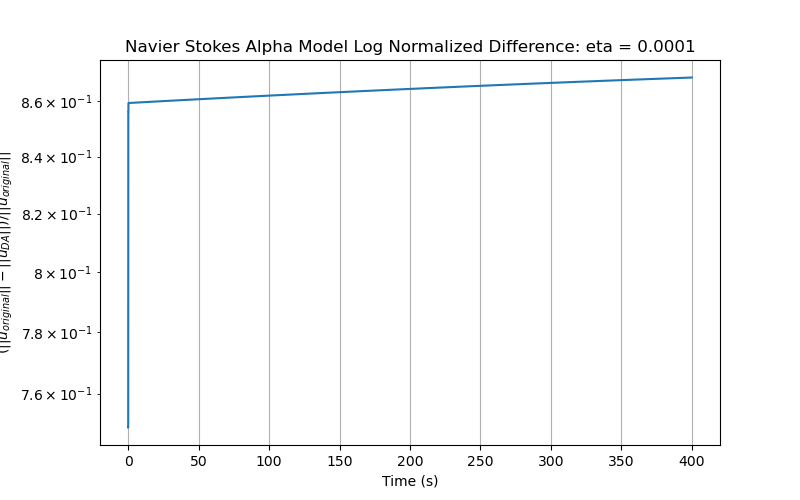}
    \caption{Error plot of NS-$\alpha$ model with high $\eta$ value-with random initial conditions case.}
    \label{fig22}
\end{figure}
\begin{figure}
\centering
    \includegraphics[totalheight=7cm]{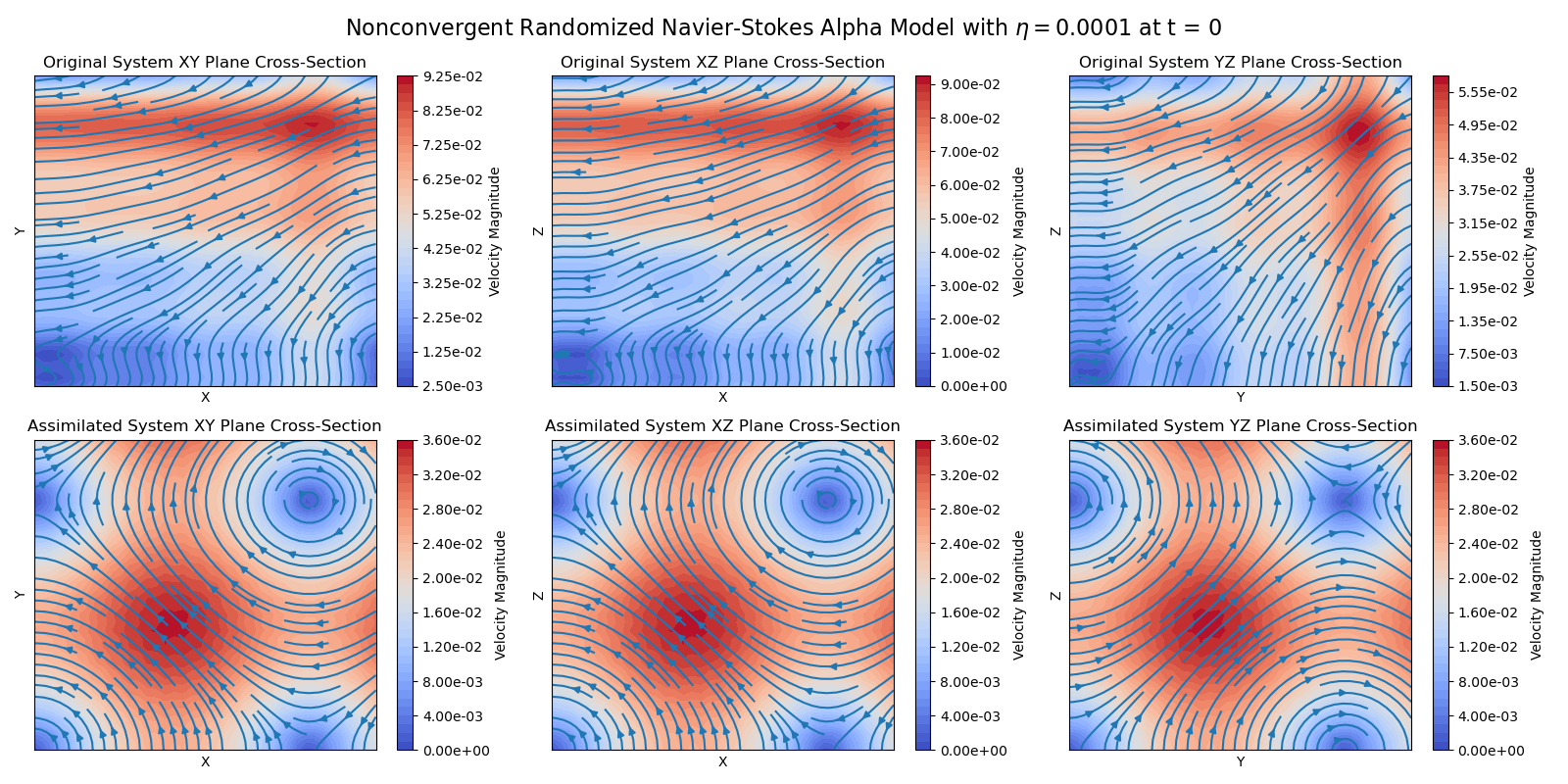}
    \caption{Velocity contour of NS-$\alpha$ model with high $\eta$ value-with random initial conditions case at $t=0$.}
    \label{fig23}
\end{figure}
\begin{figure}
\centering
    \includegraphics[totalheight=7cm]{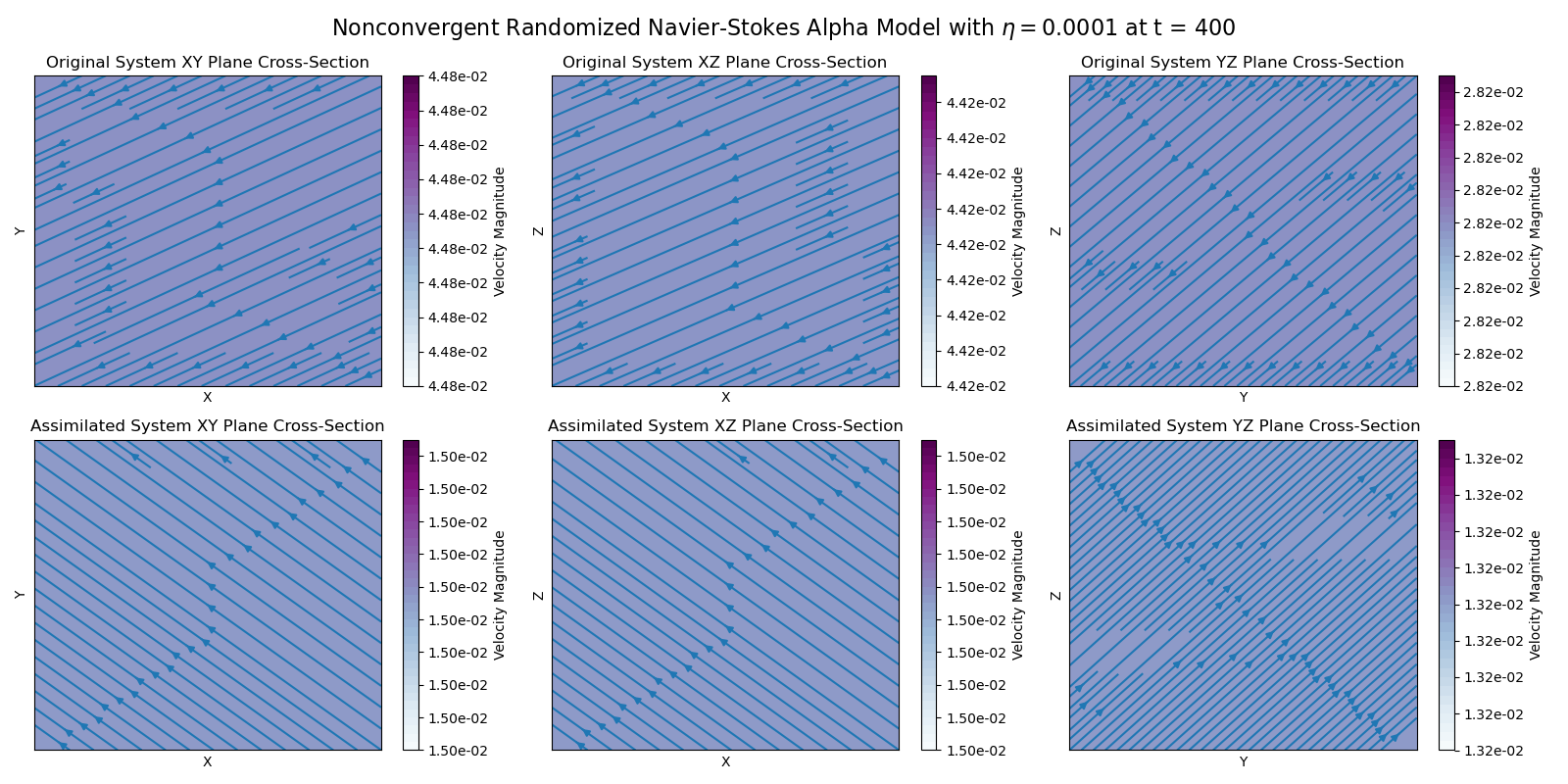}
    \caption{Velocity contour of NS-$\alpha$ model with high $\eta$ value-with random initial conditions case at $t=400$.}
    \label{fig24}
\end{figure}

\subsection{Discussions on the numerical computations} 
In the simulations, selecting parameters that satisfy those hypotheses while also demonstrating relevant behavior is a challenging task. 
In our selection, these conditions were carefully designed to account for several factors. Most importantly, the conditions between the assimilated and original systems need to differ enough so we could observe the convergence behavior. If the conditions were identical or very similar, the models would appear to converge even with very low eta values. Moreover, we choose such initial conditions so we can induce turbulence and differentiate the values of alpha and beta between the two systems. 


 The projection operator plays a crucial role in the nudging term used for implementing data assimilation in the turbulence models. In our numerical simulations, we have used the spectral methods and employ a Fourier projection to treat the nudging term. In spectral methods, differential equations are solved by representing the solution as a sum of basis functions, which transforms the differential equations into a system of algebraic equations in the frequency domain. The projection operator is applied to the nudging term, setting the Fourier coefficients of high modes to be zero.

During our numerical tests, we observed that convergence can still occur even when the conditions on $\eta$ and $h$ are not fully satisfied. For example, we found instances where convergence occurred despite $h$ being large or $\eta$ being small. The theoretical results in this paper are designed to guarantee success in worst-case scenarios, ensuring that the models will converge if the constraints are met, regardless of the initial conditions. However, since this paper does not focus on numerical analysis, we have included graphs of simulation that support our theoretical findings which will not necessarily diverge if the constraints are not met.
 
\section{Conclusions and Discussions}
In this work, we have studied the parameter estimation on the data assimilation algorithm for the simplified Bardina model and Navier-Stokes-$\alpha$ model. Our approach involves creating an approximate solution for the simplified Bardina model using an interpolant operator derived from observational data of the system. We have presented the long-time error between an approximate solution of the data assimilated system of both mentioned models to their real solutions. Different than literature studies, our approximate solution is with the length scale parameter $\alpha>0$ being considered unknown. We have demonstrated that, under appropriate conditions, the approximate solution converges to the real solution, with the exception of an error influenced by factors such as viscosity, the forcing term, and estimates in the $L^2$-norm of the real solution and its derivatives. Additionally, the error is evidently affected by the difference between the real and approximating parameters. We have also provided numerical simulations to support our theoretical findings.

In future, we plan to extend our studies on more turbulence models. We would also like to implement the parameter recovery algorithm for these turbulence models. In particular, we will design an algorithm to recover the unknown parameter $\alpha$.

\section{Appendix}
Proof of Lemma \ref{Gronwall}:
\begin{proof} Multiplying \eqref{ineq1} by $e^{Ct}$, and integrating over $[t_0,t]$, we get
$$\xi(t)\leq e^{-C(t-t_0)}\xi(t_0)+\ds\int_{t_0}^{t}e^{C(s-t)}\beta(s)\,ds$$

Now, fix $T>t_0$ and let $k_0\in\mathbb{N}$ such that
$$(k_0-1)T\leq t\leq k_0T.$$
Then
\begin{eqnarray}
    \xi(t)  \leq & \hspace{-2cm} e^{-C(t-t_0)}\xi(t_0)+\ds\sum_{k=1}^{k_0}\ds\int_{t_o+(k-1)T}^{t_0+kT}e^{C(s-t)}\beta(s)\,ds\nonumber\\
  \leq  & \hspace{-1cm} \nonumber e^{-C(t-t_0)}\xi(t_0)+\ds\sum_{k=1}^{k_0}\ds\int_{t_o+(k-1)T}^{t_0+kT}e^{C(kT-(k_0-1)T)}\beta(s)\,ds\nonumber\\
  \leq  & \nonumber e^{-C(t-t_0)}\xi(t_0)+\ds\sum_{k=1}^{k_0}e^{C(k-k_0+1)T}\sup_{k=1,...k_0}\ds\int_{t_0+(k-1)T}^{t_0+kT}\beta(s)ds. \nonumber
\end{eqnarray}
Therefore

\begin{eqnarray*}
    \xi(t)  \leq & \hspace{-2cm} e^{-C(t-t_0)}\xi(t_0)+Me^{C(1-k_0)T}\ds\sum_{k=1}^{k_0}e^{CkT} \nonumber\\
      \leq & e^{-C(t-t_0)}\xi(t_0)+Me^{C(1-k_0)T}\left(\ds\frac{e^{CT}-e^{C(k_0+1)T}}{1-e^{CT}}\right)\nonumber\\
      \leq & \hspace{-1.5cm} e^{-C(t-t_0)}\xi(t_0)+Me^{2CT}\left(\ds\frac{1-e^{-Ck_0T}}{e^{CT}-1}\right)\nonumber\\
     \leq & \hspace{-3cm} e^{-C(t-t_0)}\xi(t_0)+M\left(\ds\frac{e^{2CT}}{e^{CT}-1}\right).
\end{eqnarray*}
This brings us to equation (\ref{int211}).
\end{proof}

\textbf{Acknowledgment}
\vspace{3pt}

Samuel Little and Jing Tian's work is partially supported by the NSF LEAPS-MPS Grant $\#2316894$.

The authors would like to thank Yu Cao, Joshua Hudson, Michael Jolly, Leo Rebholz, and Jared Whitehead for their assistance and valuable suggestions regarding the numerical computations.



\printbibliography

@inproceedings{bardina1980improved,
  title={Improved subgrid-scale models for large-eddy simulation},
  author={Bardina, Jorge and Ferziger, J and Reynolds, WC},
  booktitle={13th fluid and plasmadynamics conference},
  pages={1357},
  year={1980}
}

@article{foias2002three,
  title={The three dimensional viscous Camassa--Holm equations, and their relation to the Navier--Stokes equations and turbulence theory},
  author={Foias, Ciprian and Holm, Darryl D and Titi, Edriss S},
  journal={Journal of Dynamics and Differential Equations},
  volume={14},
  pages={1--35},
  year={2002},
  publisher={Springer}
}

@article{azouani2014continuous,
  title={Continuous data assimilation using general interpolant observables},
  author={Azouani, Abderrahim and Olson, Eric and Titi, Edriss S},
  journal={Journal of Nonlinear Science},
  volume={24},
  pages={277--304},
  year={2014},
  publisher={Springer}
}

@article{cheskidov2005leray,
  title={On a Leray--$\alpha$ model of turbulence},
  author={Cheskidov, Alexey and Holm, Darryl D and Olson, Eric and Titi, Edriss S},
  journal={Proceedings of the Royal Society A: Mathematical, Physical and Engineering Sciences},
  volume={461},
  number={2055},
  pages={629--649},
  year={2005},
  publisher={The Royal Society}
}

@article{ilyin2006modified,
  title={A modified-Leray-$\alpha$ subgrid scale model of turbulence},
  author={Ilyin, Alexei A and Lunasin, Evelyn M and Titi, Edriss S},
  journal={Nonlinearity},
  volume={19},
  number={4},
  pages={879},
  year={2006},
  publisher={IOP Publishing}
}

@article{chen1999connection,
  title={A connection between the Camassa--Holm equations and turbulent flows in channels and pipes},
  author={Chen, Shiyi and Foias, Ciprian and Holm, Darryl D and Olson, Eric and Titi, Edriss S and Wynne, Shannon},
  journal={Physics of Fluids},
  volume={11},
  number={8},
  pages={2343--2353},
  year={1999},
  publisher={American Institute of Physics}
}

@article{chen1999camassa,
  title={The Camassa--Holm equations and turbulence},
  author={Chen, Shiyi and Foias, Ciprian and Holm, Darryl D and Olson, Eric and Titi, Edriss S and Wynne, Shannon},
  journal={Physica D: Nonlinear Phenomena},
  volume={133},
  number={1-4},
  pages={49--65},
  year={1999},
  publisher={Elsevier}
}

@article{chen1998camassa,
  title={Camassa-Holm equations as a closure model for turbulent channel and pipe flow},
  author={Chen, Shiyi and Foias, Ciprian and Holm, Darryl D and Olson, Eric and Titi, Edriss S and Wynne, Shannon},
  journal={Physical Review Letters},
  volume={81},
  number={24},
  pages={5338},
  year={1998},
  publisher={APS}
}

@article{layton2006well,
  title={On a well-posed turbulence model},
  author={Layton, W and Lewandowski, Roger},
  journal={Discrete and continuous dynamical systems series B},
  volume={6},
  number={1},
  pages={111},
  year={2006},
  publisher={AIMS PRESS}
}

@book{doi:10.1137/1.9781611970050,
  title={Navier–Stokes Equations and Nonlinear Functional Analysis},
  author={Temam, R.},
  year={1995},
  publisher={Society for Industrial and Applied Mathematics}
}

@book{Foias_Manley_Rosa_Temam_2001, 
place={Cambridge}, 
series={Encyclopedia of Mathematics and its Applications}, 
title={Navier-Stokes Equations and Turbulence}, 
publisher={Cambridge University Press}, 
author={Foias, C. and Manley, O. and Rosa, R. and Temam, R.},
year={2001}, 
collection={Encyclopedia of Mathematics and its Applications}}

@article{albanez2016continuous,
  title={Continuous data assimilation for the three-dimensional Navier--Stokes-$\alpha$ model},
  author={Albanez, D{\'e}bora AF and Nussenzveig Lopes, Helena J and Titi, Edriss S},
  journal={Asymptotic Analysis},
  volume={97},
  number={1-2},
  pages={139--164},
  year={2016},
  publisher={IOS Press}
}

@article{farhat2015continuous,
  title={Continuous data assimilation for the 2D B{\'e}nard convection through velocity measurements alone},
  author={Farhat, Aseel and Jolly, Michael S and Titi, Edriss S},
  journal={Physica D: Nonlinear Phenomena},
  volume={303},
  pages={59--66},
  year={2015},
  publisher={Elsevier}
}

@article{farhat2016data,
  title={Data assimilation algorithm for 3D B{\'e}nard convection in porous media employing only temperature measurements},
  author={Farhat, Aseel and Lunasin, Evelyn and Titi, Edriss S},
  journal={Journal of Mathematical Analysis and Applications},
  volume={438},
  number={1},
  pages={492--506},
  year={2016},
  publisher={Elsevier}
}

@article{farhat2016charney,
  title={On the Charney conjecture of data assimilation employing temperature measurements alone: the paradigm of 3D planetary geostrophic model},
  author={Farhat, Aseel and Lunasin, Evelyn and Titi, Edriss S},
  journal={Mathematics of Climate and Weather Forecasting},
  volume={2},
  number={1},
  year={2016},
  publisher={De Gruyter Open}
}

@article{farhat2017data,
  title={A data assimilation algorithm: The paradigm of the 3D Leray-α model of turbulence},
  author={Farhat, Aseel and Lunasin, Evelyn and Titi, Edriss S},
  journal={Partial differential equations arising from physics and geometry},
  volume={450},
  year={2019},
  publisher={Cambridge University Press}
}

@article{farhat2020data,
  title={Data assimilation in large Prandtl Rayleigh--Bénard convection from thermal measurements},
  author={Farhat, Aseel and Glatt-Holtz, Nathan E and Martinez, Vincent and McQuarrie, Shane and Whitehead, Jared},
  journal={SIAM Journal on Applied Dynamical Systems},
  volume={19},
  year={2020},
  publisher={SIAM}
}

@article{jolly2017data,
  title={A data assimilation algorithm for the subcritical surface quasi-geostrophic equation},
  author={Jolly, Michael S and Martinez, Vincent R and Titi, Edriss S},
  journal={Advanced Nonlinear Studies},
  volume={17},
  number={1},
  pages={167--192},
  year={2017},
  publisher={De Gruyter}
}

@article{jolly2015determining,
  title={A determining form for the damped driven nonlinear Schr{\"o}dinger equation—Fourier modes case},
  author={Jolly, Michael S and Sadigov, Tural and Titi, Edriss S},
  journal={Journal of Differential Equations},
  volume={258},
  number={8},
  pages={2711--2744},
  year={2015},
  publisher={Elsevier}
}

@article{jolly2019continuous,
  title={Continuous data assimilation with blurred-in-time measurements of the surface quasi-geostrophic equation},
  author={Jolly, Michael S and Martinez, Vincent R and Olson, Eric J and Titi, Edriss S},
  journal={Chinese Annals of Mathematics, Series B},
  volume={40},
  number={5},
  pages={721--764},
  year={2019},
  publisher={Springer}
}

@article{markowich2016continuous,
  title={Continuous data assimilation for the three-dimensional Brinkman--Forchheimer-extended Darcy model},
  author={Markowich, Peter A and Titi, Edriss S and Trabelsi, Saber},
  journal={Nonlinearity},
  volume={29},
  number={4},
  pages={1292},
  year={2016},
  publisher={IOP Publishing}
}

@article{albanez2018continuous,
  title={Continuous data assimilation algorithm for simplified Bardina model},
  author={Albanez, D{\'e}bora AF and Benvenutti, Maicon J},
  journal={Evol. Equ. Control Theory},
  volume={7},
  number={1},
  pages={33--52},
  year={2018}
}

@article{carlson2020parameter,
  title={Parameter recovery for the 2 dimensional Navier-Stokes equations via continuous data assimilation},
  author={Carlson, Elizabeth and Hudson, Joshua and Larios, Adam},
  journal={SIAM Journal on Scientific Computing},
  volume={42},
  number={1},
  pages={A250--A270},
  year={2020},
  publisher={SIAM}
}

@article{carlsonHudson2022,
  title={Dynamically learning the parameters of a chaotic system using partial observations},
  author={Carlson, Elizabeth and Hudson, Joshua and Larios, Adam and Martinez, Vicent and Ng, Eunice and Whitehead, Jared},
  journal={Discrete and Continuous Dynamical Systems},
  volume={42},
  number={8},
  pages={3809--3839},
  year={2022},
  publisher={AIMS}
}

@article{DeboraMaiconBrinkman,
title={Parameter analysis in continuous data assimilation for
three-dimensional Brinkman–Forchheimer-extended Darcy
model},
  author={Albanez, Débora AF and Benvenutti, Maicon J},
  journal={Partial Differential Equations and Applications},
  year={2024},
publisher={to appear}
}

@book{constantin1988navier,
  title={Navier-stokes equations},
  author={Constantin, Peter and Foias, Ciprian},
  year={1988},
  publisher={University of Chicago Press}
}

@book{robinson,
  title={Infinite-Dimensional Dynamical Systems},
  author={Robinson, J.C.},
  year={2001},
  publisher={Cambridge Texts Appl. Math., Cambridge University Press}
}

@article{Martinez_2022,
  title = {Convergence analysis of a viscosity parameter recovery algorithm for the 2D Navier–Stokes equations},
  author={Martinez, Vincent R. },
  journal={Nonlinearity},
  volume = {35},
  number={4},
  pages={2241},
  year={2022},
  publisher={IOP Publishing}
}

@article{cao2006global,
  title={Global well-posedness of the three-dimensional viscous and inviscid simplified Bardina turbulence models},
  author={Cao, Yanping and Lunasin, Evelyn M and Titi, Edriss S},
  journal={Communications in Mathematical Sciences},
  volume={4},
  number={4},
  pages={823--848},
  year={2006},
  publisher={International Press of Boston}
}

@article{yucao2022,
  title={Continuous data assimilation for the 3D Ladyzhenskaya model: analysis and computations},
  author={Cao, Yu and Andrea Giorgini, Andrea and Jolly, Michael and Ali Pakzad},
  journal={Nonlinear Analysis: Real World Applications},
  volume={68},
  number={},
  pages={103659},
  year={2022},
 publisher={Elsevier}
}

@article{CHEN2023133552,
title = {Data assimilation with model error: Analytical and computational study for Sabra shell model},
author = {Nan Chen and Aseel Farhat and Evelyn Lunasin},
journal = {Physica D: Nonlinear Phenomena},
volume = {443},
number = {},
pages = {133552},
year = {2023},
publisher = {Elsevier}
}

@article{1M1426109,
  title={Concurrent MultiParameter Learning Demonstrated on the Kuramoto--Sivashinsky Equation},
  author={Pachev, Benjamin and Whitehead, Jared P. and McQuarrie, Shane A.},
  journal={SIAM Journal on Scientific Computing},
  volume={44},
  number={5},
  pages={A2974--A2990},
  year={2022},
  publisher={SIAM}
}

@article{PhysRevFluids.9.054602,
  title = {Identifying the body force from partial observations of a two-dimensional incompressible velocity field},
  author = {Farhat, Aseel and Larios, Adam and Martinez, Vincent R. and Whitehead, Jared P.},
  journal = {Phys. Rev. Fluids},
  volume = {9},
  number = {5},
  pages = {054602},
 year = {2024},
  publisher = {American Physical Society},
 
}

@article{Biswas_2023,
title = {Determining the viscosity of the Navier–Stokes equations from observations of finitely many modes},
author = {Animikh Biswas and Joshua Hudson},
journal = {Inverse Problems},
volume = {39},
number = {12},
pages = {125012},
year = {2023},
publisher = {IOP Publishing},
}

@article{Martinez_2024, title={On the reconstruction of unknown driving forces from low-mode observations in the 2D Navier–Stokes equations}, DOI={10.1017/prm.2024.31}, journal={Proceedings of the Royal Society of Edinburgh: Section A Mathematics}, author={Martinez, Vincent R.}, year={2024}, pages={1–24}}

@book{evans2022partial,
  title={Partial differential equations},
  author={Evans, Lawrence C},
  volume={19},
  year={2022},
  publisher={American Mathematical Society}
}

@book{Galdi,
  title={An Introduction to the Mathematical Theory of the Navier-Stokes Equations, Steady-State Problems},
  author={Galdi,  Giovanni P.},
  year={2011},
  publisher={Springer New York, NY}
}

\end{document}